\documentclass{sig-alternate-2013}
\pdfoutput=1

\usepackage{etoolbox}

\usepackage[usenames,dvipsnames]{color}

\usepackage{algorithm}
\usepackage{algorithmic}
\usepackage{graphicx} 
\usepackage{amsmath}
 \usepackage{amssymb}
\usepackage[caption=false]{subfig}

\usepackage{flushend}

\usepackage{tikz}
\usetikzlibrary{shapes,arrows,shadows,positioning,matrix,backgrounds}

\newtheorem{theorem}{Theorem}
\newtheorem{prop}{Proposition}

\newtheorem{lem}{Lemma}

\newcommand{\vecn}{{\bf n}}

\newcommand{\vecalpha}{{\boldsymbol \alpha}}

\newcommand{\pnorm}[2]{{\Vert #1 \Vert} _{#2}}

\newcommand{\E}[1]{\mathbb{E}\left[#1 \right]}

\newcommand{\INDSTATE}[1][1]{\STATE\hspace{#1em}} 

\newcommand{\trian}{\textsc{trian}}
\newcommand{\prof}{\textsc{3-prof}}
\newcommand{\egopar}{\textsc{Ego-par}}
\newcommand{\egoser}{\textsc{Ego-ser}}

\usepackage{hyperref}

\begin{document}

\setcounter{MaxMatrixCols}{20} 


%
\CopyrightYear{2015} 
\title{Beyond Triangles: A Distributed Framework for Estimating 3-profiles of Large Graphs
\titlenote{The authors would like to acknowledge support from NSF CCF-1344364, NSF CCF-1344179, ARO YIP W911NF-14-1-0258, DARPA XDATA, and research gifts by Google and Docomo. \\
This work will be presented in part at KDD'15.
}
}

\toappear{}

%
%
%
%

\numberofauthors{4} 
%
\author{
%
%
\alignauthor
Ethan R. Elenberg\\
       \affaddr{The University of Texas}\\
       \affaddr{Austin, Texas 78712, USA}\\
       \email{elenberg@utexas.edu}
\alignauthor
Karthikeyan Shanmugam\\
       \affaddr{The University of Texas}\\
       \affaddr{Austin, Texas 78712, USA}\\
       \email{karthiksh@utexas.edu}
\and  
\alignauthor Michael Borokhovich\\
       \affaddr{The University of Texas}\\
       \affaddr{Austin, Texas 78712, USA}\\
       \email{michaelbor@utexas.edu}
\alignauthor Alexandros G. Dimakis\\
       \affaddr{The University of Texas}\\
       \affaddr{Austin, Texas 78712, USA}\\
       \email{dimakis@austin.utexas.edu}
}
\maketitle

\begin{abstract}
We study the problem of approximating the $3$-profile of a large graph. $3$-profiles are generalizations of triangle counts that specify the number of times a small graph appears as an induced subgraph of a large graph. Our algorithm uses the novel concept of $3$-profile sparsifiers: sparse graphs that can be used to approximate the full $3$-profile counts for a given large graph. Further, we study the problem of estimating local and ego $3$-profiles, two graph quantities that characterize the local neighborhood of each vertex of a graph. 

Our algorithm is distributed and operates as a vertex program over the GraphLab PowerGraph framework. We introduce the concept of edge pivoting which allows us to collect $2$-hop information without maintaining an explicit $2$-hop neighborhood list at each vertex. This enables the computation of all the local $3$-profiles in parallel with minimal communication. 

We test out implementation in several experiments scaling up to $640$ cores on Amazon EC2. We find that our algorithm can estimate the $3$-profile of a graph in approximately the same time as triangle counting. 
For the harder problem of ego $3$-profiles, we introduce an algorithm that can estimate profiles of hundreds of thousands of vertices in parallel, in the timescale of minutes.
\end{abstract}

\section*{Categories and Subject Descriptors}
G.2.2 [\textbf{Graph Theory}]: Graph Algorithms; C.2.4 [\textbf{Distributed Systems}] Distributed Applications
\section*{Keywords}
3-profiles; Graph Sparsifiers; Motifs; Graph Engines; GraphLab; Distributed Systems; Graph Analytics

\section{Introduction}

Given a small integer $k$ (\textit{e.g.} $k=3$ or $4$), the $k$-profile of a graph $G(V,E)$ is a vector with one coordinate for each distinct $k$-node graph $H_i$ (see Figure \ref{fig:3profiles} for $k=3$). Each coordinate counts the number of times that $H_i$ appears as an induced subgraph of $G$. For example, the graph $G=K_4$ (the complete graph on $4$ vertices) has the $3$-profile $[0,0,0,4]$ since it contains $4$ triangles and no other (induced) subgraphs. The graph $C_5$ (the cycle on $5$ vertices, \textit{i.e.} a pentagon) has the $3$-profile $[0,5,5,0]$. 
Note that the sum of the $k$-profile is always $\binom{|V|}{k}$, the total number of subgraphs.

One can see $k$-profiles as a generalization of triangle (as well as other motif) counting problems. They are increasingly popular for graph analytics both for practical and theoretical reasons. They form a concise graph description that has found several applications for the web~\cite{becchetti08,OCallaghan2012}, social networks \cite{Ugander2013}, and biological networks~\cite{przPPIorig} and seem to be empirically useful. Theoretically, they connect to the emerging theory of graph homomorphisms, graph limits and graphons~\cite{Borgs2006,Ugander2013,lovasz2012large}.

\def\blockdist{1.5}
\def\edgedist{5.5}
\def\smalldist{.2}
\begin{figure}
\centering
\begin{tikzpicture}[
inner/.style={circle,draw,fill=black,inner sep=1pt, minimum size=1em},
outer/.style= {draw
}
]
\matrix (w0) [matrix of nodes, outer, nodes={inner}, label=$H_0$]{
     &{} \\
    {} & & {}\\
  };
  \matrix (w1) [matrix of nodes, outer, nodes={inner},right=\smalldist of w0, label=$H_1$]{
       &{} \\
      {} & & {}\\
    };
  \matrix (w2) [matrix of nodes, outer, nodes={inner},right=\smalldist of w1, label=$H_2$]{
       &{} \\
      {} & & {}\\
    };
  \matrix (w3) [matrix of nodes, outer, nodes={inner},right=\smalldist of w2, label=$H_3$]{
       &{} \\
      {} & & {}\\
    };
\draw[thick] (w1-2-1)--(w1-2-3);
  \draw[thick] (w2-2-3)--(w2-2-1)--(w2-1-2);
  \draw[thick] (w3-2-1)--(w3-1-2)--(w3-2-3)--(w3-2-1);
\end{tikzpicture}
\caption{Subgraphs in the $3$-profile of a graph. We call them (empty, edge, wedge, triangle). The $3$-profile of a graph $G$ counts how many times each of $H_i$ appears in $G$.}
\label{fig:3profiles}
\end{figure}
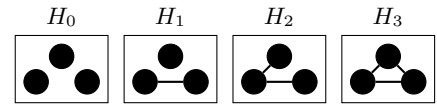

In this paper we introduce a novel distributed algorithm for estimating the $k=3$-profiles of massive graphs.
In addition to estimating the (global) $3$-profile, we address two more general problems. One is calculating the \textit{local} $3$-profile for each vertex $v_j$. This assigns a vector to each vertex that counts how many times $v_j$ participates in each subgraph $H_i$. These local vectors contain a higher resolution description of the graph and are used to obtain the global $3$-profile (simply by rescaled addition as we will discuss). 

The second related problem is that of calculating the \text{ego} $3$-profile for each vertex $v_j$. This is the $3$-profile of the graph $N(v_j)$ \textit{i.e.} the neighbors of $v_j$, also called the ego graph of $v_j$. The $3$-profile of the ego graph of $v_j$ can be seen as a projection of the vertex into a coordinate system~\cite{Ugander2013}. This is a very interesting idea of viewing a big graph as a collection of small dense graphs, in this case the ego graphs of the vertices. Note that calculating the ego $3$-profiles for a set of vertices of a graph is different (in fact, significantly harder) than calculating local $3$-profiles. 

\textbf{Contributions:} Our first contribution is a provable edge sub-sampling scheme: we establish sharp concentration results 
for estimating the entire $3$-profile of a graph. This allows us to randomly discard most edges of the graph and still have $3$-profile estimates that are provably within a bounded error with high probability. Our analysis is based on modeling the transformation from original to sampled graph as a one step Markov chain with transitions expressed as a function of the sampling probability. 
Our result is that a random sampling of edges forms a $3$-profile sparsifier, \textit{i.e.} a subgraph that preserves the elements of the $3$-profile with sufficient probability concentration. Our result is a generalization of the triangle sparsifiers by Tsourakakis \textit{et al. }~\cite{Tsourakakis2011sparsifier}. Our proof relies on a result by Kim and Vu~\cite{KimVu2000concentration} on concentration of multivariate polynomials, similarly to~\cite{Tsourakakis2011sparsifier}. Unfortunately, the Kim and Vu concentration holds only for a class of polynomials called totally positive and some terms in the $3$-profile do not satisfy this condition.
For that reason, the proof of~\cite{Tsourakakis2011sparsifier} does not directly extend beyond triangles. Our technical innovation involves showing that it is still possible to decompose our polynomials as combinations of totally positive polynomials using a sequence of variable changes. 

Our second innovation deals with designing an efficient, \textit{distributed} algorithm for estimating $3$-profiles on the sub-sampled graph. We rely on the Gather-Apply-Scatter model used in Graphlab PowerGraph~\cite{powergraphGAS2012} but, more generally, our algorithm fits the architecture of most graph engines. We introduce the concept of \textit{edge pivoting} which allows us to collect $2$-hop information without maintaining an explicit $2$-hop neighborhood list at each vertex. This enables the computation of all the local $3$-profiles in parallel. Each edge requires only information from its endpoints and each vertex only computes quantities using data from incident edges. For the problem of ego $3$-profiles, we show how to calculate them by combining edge pivot equations and local clique counts. 

We implemented our algorithm in GraphLab and performed several experiments scaling up to $640$ cores on Amazon EC2. 
We find that our algorithm can estimate the $3$-profile of a graph in approximately the same time as triangle counting. Specifically, we compare against the PowerGraph triangle counting routine and find that it takes us only $1\%$-$10\%$ more time to compute the full $3$-profile. For the significantly harder problem of ego $3$-profiles, we were able to compute (in parallel) the $3$-profiles of 
up to $100,000$ ego graphs in the timescale of several minutes. We compare our parallel ego $3$-profile algorithm to a simple sequential algorithm that operates on each ego graph sequentially and shows tremendous scalability benefits, as expected. 
Our datasets involve social network and web graphs with edges ranging in number from tens of millions to over one billion. We present results on both overall runtimes and network communication on multicore and distributed systems.

\section{Related Work}
In this section, we describe several related topics and discuss differences in relation to our work.

\noindent \textbf{Graph Sub-Sampling:} Random edge sub-sampling is a natural way to quickly obtain estimates for graph parameters.
For the case of triangle counting such graphs are called a triangle sparsifiers~\cite{Tsourakakis2011sparsifier}. Related ideas were explored in the Doulion algorithm~\cite{Tsourakakis,TsourakakisSubsamp2009,Tsourakakis2011sparsifier} with increasingly strong concentration bounds.
The recent work by Ahmed \textit{et al.}~\cite{Ahmed2014} develops subgraph estimators for clustering coefficient, triangle count, and wedge count in a streaming sub-sampled graph. Other recent work 
\cite{seshadri2012wedge,Jha2012Birthday,Seshadhri2013wedge,Bhuiyan2012,Jha2014} 
uses random sampling to estimate parts of the $3$ and $4$-profile. These methods do not account for a distributed computation model and require more complex sampling rules.
As discussed, our theoretical results build on \cite{Tsourakakis2011sparsifier} to define the first $3$-profile sparsifiers, sparse graphs that are \textit{a fortiori} triangle sparsifiers. 

\noindent \textbf{Triangle Counting in Graph Engines:} Graph engines (\textit{e.g.} Pregel, GraphLab, Galois, GraphX, see~\cite{Satish2014} for a comparison) are frameworks for expressing distributed computation on graphs in the language of vertex programs. Triangle counting algorithms \cite{Schank2007,becchetti08} form one of the standard graph analytics tasks for such frameworks~\cite{powergraphGAS2012,Satish2014}.
In \cite{Chu2011}, the authors list triangles efficiently, by partitioning the graph into components and processing each component in parallel.
Typically, it is much harder to perform graph analytics over the MapReduce framework but some recent work~\cite{Pagh2012,SuriReducer} has
used clever partitioning and provided theoretical guarantees for triangle counting.

\noindent \textbf{Matrix Formulations:} Fast matrix multiplication has been used for certain types of subgraph counting. Alon \textit{et al.} proposed a cycle counting algorithm which uses the trace of a matrix power on high degree vertices \cite{Alon1997}. 
Some of our edge pivot equations have appeared in \cite{Kloks2000eqs,Kowaluk2013eqs,Williams2014mod}, all in a centralized setting.
Related approximation schemes \cite{Tsourakakis} and randomized algorithms \cite{Williams2014mod} depend on centralized architectures and computing matrix powers of very large matrices.

\noindent \textbf{Frequent Subgraph Discovery:} The general problem of finding frequent subgraphs, also known as motifs or subgraph isomorphisms, is to find the number of occurrences of a small query graph within a larger graph. Typically frequent subgraph discovery algorithms offer pruning rules to eliminate false positives early in the search~\cite{Yan2002,Lee2012,Han2013,Ribeiro2010,Saltz2014}.
This is most applicable when subgraphs have labelled vertices or directed edges. For these problems, the number of unique isomorphisms grows much larger than in our application.

In \cite{Ugander2013}, subgraphs were queried on the ego graphs of users. While enumerating all $3$-sets and sampling $4$-sets neighbors can be done in parallel, forming the ego subgraphs requires checking for edges between neighbors. This suggests that a graph engine implementation would be highly preferable over an Apache Hive system. Our algorithms simultaneously compute the ego subgraphs and their profiles, reducing the amount of communication between nodes. 
Our algorithm is suitable for both NUMA multicore and distributed architectures, but our implementation focus in this paper is on GraphLab.

\noindent \textbf{Graphlets:} First described in \cite{przPPIorig}, graphlets generalize the concept of vertex degree to include the connected subgraphs a particular vertex participates in with its neighbors. Unique graphlets are defined at a vertex based on its degree in the subgraph. Graphlet frequency distributions (GFDs) have proven extremely useful in the field of bioinformatics. Specifically, GFD analysis of protein interaction networks helps to design improved generative models \cite{Hormozdiari2007}, accurate similarity measures \cite{przPPIorig}, 
and better features for classification 
\cite{Milenkovik2008,Shervashidze2009Learning}.
Systems that use our edge pivot equations (in a different form) appear in prior literature for calculating GFDs \cite{Hocevar2014bio,Marcus2012rage} but not for enabling distributed computation.

\section{Unbiased 3-profile Estimation}\label{sec:concentration}
In this section, we are interested in estimating the number of $3$ node subgraphs of type $H_0$, $H_1$, $H_2$ and $H_3$, as depicted in Figure \ref{fig:3profiles}, in a given graph $G$. Let the estimated counts be denoted $X_i,~i \in \{0,1,2,3\}$. Let the actual counts be $n_i,~i \in \{0,1,2,3\}$. This set of counts is called a $3$-profile of the graph, denoted with the following vector:
\begin{align}
\vecn_{3}(G) &= \left[n_0, \, n_1, \, n_2, \, n_3\right].
\end{align}
Because the vector is a scaled probability distribution, there are only $3$ degrees of freedom. Therefore, we calculate
\begin{align*}
\vecn_{3}(G) &= \left[{|V| \choose 3} - n_1 - n_2 - n_3, \, n_1, \, n_2, \, n_3\right].
\end{align*}

In the case of a large graph, computational difficulty in estimating the $3$-profile depends on the total number of edges in the large graph. We would like to estimate each of the $3$-profile counts within a multiplicative factor. So we first sub-sample the set of edges in the graph with probability $p$. We compute all $3$-profile counts of the sub-sampled graph exactly. Let $\{Y_i\}_{i=0}^{3}$ denote the exact $3$-profile counts of the random sub-sampled graph. 
We relate the sub-sampled $3$-profile counts to the original ones through a one step Markov chain involving transition probabilities. The sub-sampling process is the random step in the chain. Any specific subgraph is preserved with some probability and otherwise transitions to one of the other subgraphs. For example, a $3$-clique is preserved with probability $p^3$. Figure \ref{fig:3transition} illustrates the other transition probabilities.

\tikzstyle{sensor}=[draw, fill=blue!20, text width=5em, 
    text centered, minimum height=2.5em,drop shadow]
\tikzstyle{ann} = [above, text width=5em, text centered]
\tikzstyle{wa} = [sensor, text width=10em, fill=red!20, 
    minimum height=2.5em, rounded corners, drop shadow]
\tikzstyle{sc} = [sensor, text width=13em, fill=red!20, 
    minimum height=10em, rounded corners, drop shadow] 
\tikzstyle{bl} = [draw, fill=blue!20, rounded corners, drop shadow]
\tikzstyle{br} = [draw, fill=red!20, rounded corners, drop shadow]
\def\blockdist{1.5}
\def\edgedist{5.5}
\def\smalldist{.2}

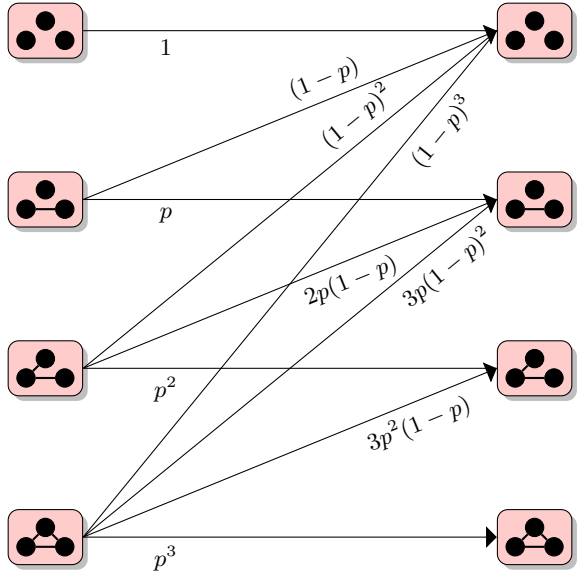
\begin{figure}
\centering
\begin{tikzpicture}[ 
  inner/.style={circle,draw,fill=black,inner sep=1pt, minimum size=.75em},
  outer/.style= {draw, fill=red!20, rounded corners, drop shadow}
  ]
  \matrix (w0) [matrix of nodes, outer, nodes={inner}]{
     &{} \\
    {} & & {}\\
  };
  \matrix (x0) [matrix of nodes, outer, nodes={inner}, right=\edgedist of w0]{
     &{} \\
    {} & & {}\\
  };
  \matrix (w1) [matrix of nodes, outer, nodes={inner}, below=\blockdist of w0]{
     &{} \\
    {} & & {}\\
  };
  \matrix (x1) [matrix of nodes, outer, nodes={inner}, right=\edgedist of w1]{
     &{} \\
    {} & & {}\\
  };
  \matrix (w2) [matrix of nodes, outer, nodes={inner}, below=\blockdist of w1]{
     &{} \\
    {} & & {}\\
  };
  \matrix (x2) [matrix of nodes, outer, nodes={inner}, right=\edgedist of w2]{
     &{} \\
    {} & & {}\\
  };
  \matrix (w3) [matrix of nodes, outer, nodes={inner}, below=\blockdist of w2]{
     &{} \\
    {} & & {}\\
  };
  \matrix (x3) [matrix of nodes, outer, nodes={inner}, right=\edgedist of w3]{
     &{} \\
    {} & & {}\\
  };

  \draw[thick] (w1-2-1)--(w1-2-3);
  \draw[thick] (x1-2-1)--(x1-2-3);
  \draw[thick] (w2-2-3)--(w2-2-1)--(w2-1-2);
  \draw[thick] (x2-2-3)--(x2-2-1)--(x2-1-2);
  \draw[thick] (w3-2-1)--(w3-1-2)--(w3-2-3)--(w3-2-1);
  \draw[thick] (x3-2-1)--(x3-1-2)--(x3-2-3)--(x3-2-1);

  \path [draw, -] (w0.east) -- node [below,pos=.2] {$1$} (x0.180) ;
  \path [draw, -triangle 90] (w1.east) -- node [above right, sloped] {$(1-p)$} (x0.180);
  \path [draw, -] (w2.east) -- node [above right, sloped, pos=.6] {$(1-p)^2$} (x0.180);
  \path [draw, -] (w3.east) -- node [below right, sloped, near end] {$(1-p)^3$} (x0.180);
  \path [draw, -] (w1.east) -- node [below,pos=.2] { $p$} (x1.180);
  \path [draw, -triangle 90] (w2.east) -- node [below right, sloped, pos=.5] {$2p(1-p)$} (x1.180);
  \path [draw, -] (w3.east) -- node [below right, sloped, pos=.73] {$3p(1-p)^2$} (x1.180);
  \path [draw, -] (w2.east) -- node [below,pos=.2] {$p^2$} (x2.180);
  \path [draw, -triangle 90] (w3.east) -- node [below right, sloped, pos=.65] {$3p^2(1-p)$} (x2.180);
  \path [draw, -triangle 90] (w3.east) -- node [below,pos=.2] {$p^3$} (x3.180);
\end{tikzpicture}
\caption{Edge sampling process.}
\label{fig:3transition}
\end{figure}

In expectation, this yields the following linear system:
\begin{align}
\begin{bmatrix}
\E{Y_0} \\ \E{Y_1} \\ \E{Y_2} \\ \E{Y_3}
\end{bmatrix}
= \begin{bmatrix}
1 & 1-p & (1-p)^2 & (1-p)^3 \\
0 & p & 2p(1-p) & 3p(1-p)^2 \\
0 & 0 & p^2 & 3p^2(1-p) \\
0 & 0 & 0 & p^3
\end{bmatrix}
\begin{bmatrix}
n_0 \\ n_1 \\ n_2 \\ n_3
\end{bmatrix} ,\label{eq:channelsys}
\end{align}

from which we obtain unbiased estimators for each entry in $\mathbf{X}(G) = \left[X_0, \, X_1, \, X_2, \, X_3\right]$:
\begin{align}
X_0 &= Y_0 - \frac{1-p}{p}Y_1 + \frac{(1-p)^2}{p^2}Y_2 - \frac{(1-p)^3}{p^3}Y_3 \label{eq:estimatorStart} \\
X_1 &= \frac{1}{p}Y_1 - \frac{2(1-p)}{p^2}Y_2 + \frac{3(1-p)^2}{p^3}Y_3 \label{eqn:estdisc}\\
X_2 &= \frac{1}{p^2}Y_2 - \frac{3(1-p)}{p^3}Y_3 \label{eq:estimatorWedge} \\
X_3 &= \frac{1}{p^3}Y_3 . \label{eq:estimatorEnd}
\end{align}

\begin{lem}
$\mathbf{X}(G)$ is an unbiased estimator of $\vecn(G)$.
\end{lem}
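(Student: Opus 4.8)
The plan is to recognize that the four estimator equations \eqref{eq:estimatorStart}--\eqref{eq:estimatorEnd} are simply the linear system \eqref{eq:channelsys} solved for the profile vector, with the random counts $Y_i$ put in place of their expectations. Write the transition matrix appearing in \eqref{eq:channelsys} as $M$, and collect the counts into $\vecY = [Y_0,Y_1,Y_2,Y_3]^{\T}$ and $\vecn(G) = [n_0,n_1,n_2,n_3]^{\T}$, so that \eqref{eq:channelsys} reads $\E{\vecY} = M\,\vecn(G)$. Since $M$ is upper triangular with diagonal entries $1,p,p^2,p^3$, all nonzero for the sampling probability $p>0$, it is invertible, and the claim is exactly that the estimator is $\mathbf{X}(G) = M^{-1}\vecY$.

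First I would verify that the coefficient matrix defined implicitly by \eqref{eq:estimatorStart}--\eqref{eq:estimatorEnd} really equals $M^{-1}$. Because $M$ is upper triangular this is a routine back-substitution performed from the bottom row up: the last row of \eqref{eq:channelsys} gives $n_3 = \E{Y_3}/p^3$, matching \eqref{eq:estimatorEnd}; substituting this into the third row recovers the coefficients of \eqref{eq:estimatorWedge}, then the second row recovers \eqref{eqn:estdisc}, and finally the first row recovers \eqref{eq:estimatorStart}. Equivalently, one may simply check that the product of $M$ with the coefficient matrix read off from the $X_i$ is the identity; the powers of $p$ and $(1-p)$ are arranged so that the off-diagonal terms telescope to zero.

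The conclusion then follows from linearity of expectation. Each $X_i$ is a fixed linear combination, with coefficients depending only on $p$, of the random variables $Y_j$, so the expectation operator commutes with this linear map and $\E{\mathbf{X}(G)} = M^{-1}\,\E{\vecY} = M^{-1} M\,\vecn(G) = \vecn(G)$, which is precisely the unbiasedness statement $\E{X_i} = n_i$ for every $i$. The genuinely substantive content lies not in this lemma but in establishing \eqref{eq:channelsys} itself, i.e.\ that the transition probabilities of Figure~\ref{fig:3transition} are correct; taking \eqref{eq:channelsys} as given, the only thing that could fail is an arithmetic mismatch between $M^{-1}$ and the hand-derived coefficients, so the main (and essentially only) obstacle is the bookkeeping of the $(1-p)$ powers in the back-substitution, where I expect no conceptual difficulty.
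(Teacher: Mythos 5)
Your proposal is correct and is essentially the paper's own argument: the paper's one-line proof (substitute \eqref{eq:estimatorStart}--\eqref{eq:estimatorEnd} into \eqref{eq:channelsys} and apply linearity of expectation) is exactly your observation that the estimator coefficients form the inverse of the upper-triangular transition matrix, just stated without the explicit matrix notation. Your back-substitution check of $M^{-1}$ is a more detailed spelling-out of the same bookkeeping, not a different route.
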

\begin{proof}
By substituting \eqref{eq:estimatorStart}--\eqref{eq:estimatorEnd} into \eqref{eq:channelsys}, clearly $\E{X_i} = n_i$ for $i = 0,1,2,3$.
\end{proof}

We now turn to prove concentration bounds for the above estimators. We introduce some notation for this purpose. Let $X$ be a real polynomial function of $m$ real random variables $\{t_i\}_{i=1}^m$.
Let $\vecalpha = (\alpha_1, \alpha_2, \ldots, \alpha_m) \in \mathbb{Z}_+^m$ and define $\mathbb{E}_{\geq 1}[X] = \max_{\vecalpha: \pnorm{\vecalpha}{1} \geq 1} \mathbb{E}(\partial^\vecalpha X)$, where
\begin{align}
\mathbb{E}(\partial^\vecalpha X) = \mathbb{E} \left[(\frac{\partial}{\partial t_1})^{\alpha_1} \ldots (\frac{\partial}{\partial t_m})^{\alpha_m} \left[ X(t_1,\ldots, t_m) \right] \right].
\end{align}
Further, we call a polynomial totally positive if the coefficients of all the monomials involved are non-negative.
We state the main technical tool we use to obtain our concentration results.
\begin{prop}[Kim-Vu Concentration \cite{KimVu2000concentration}] \label{KIMVUCONCENTRATION}
Let $X$ be a random totally positive Boolean polynomial in $m$ Boolean random variables with degree at most $k$. If $\mathbb{E}[X] \geq \mathbb{E}_{\geq 1}[X]$, then
\begin{align}
\mathbb{P}\left( |X - \mathbb{E}[X] | > a_k \sqrt{\mathbb{E}[X] \mathbb{E}_{\geq 1}[X]}  \lambda^k \right) \nonumber \\
= \mathcal{O}\left(\exp \left(-\lambda + (k-1) \log m \right) \right)   \label{eq:kvtheorem}
\end{align}
 for any $\lambda > 1$, where $a_k = 8^k k!^{1/2}$.
\end{prop}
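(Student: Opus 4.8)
The plan is to prove the bound by the method of moments combined with a martingale exposure of the Boolean coordinates, inducting on the degree $k$. Since each $t_i$ is Boolean, $t_i^2 = t_i$, so I would first reduce $X$ to the equivalent multilinear polynomial that agrees with it on $\{0,1\}^m$ and preserves total positivity; this lets me write $X = t_j \, \partial_j X + R_j$ with neither $\partial_j X$ nor $R_j$ depending on $t_j$, which is the algebraic identity that drives everything. The overall target is a high-moment estimate for the centered polynomial, which Markov's inequality then converts into the stated tail.

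Next I would set up the Doob martingale $Z_j = \mathbb{E}[X \mid t_1, \dots, t_j]$ with differences $D_j = Z_j - Z_{j-1}$, and estimate the conditional moments $\mathbb{E}[D_j^{2\ell} \mid t_1,\dots,t_{j-1}]$. The key structural fact is that flipping $t_j$ perturbs $X$ by an amount governed by $\partial_j X$, and more generally that higher conditional moments are controlled by expectations of higher-order partial derivatives $\partial^\alpha X$. This is exactly where total positivity is indispensable: nonnegativity of every coefficient makes $X$ and each $\partial^\alpha X$ coordinatewise monotone, so that the conditional expectation of any derivative under a partial assignment of the already-revealed variables is dominated by its unconditional expectation, and hence by $\mathbb{E}_{\geq 1}[X]$. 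Without this monotonicity the conditional sensitivities could blow up and no clean bound in terms of $\mathbb{E}_{\geq 1}[X]$ would survive.

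The induction on $k$ then proceeds by noting that each $\partial^\alpha X$ is itself a totally positive polynomial of degree at most $k - \|\alpha\|_1$, so the inductive hypothesis supplies moment control for precisely the derivatives appearing in the martingale-difference estimates. Assembling the per-coordinate bounds and summing over $j$ yields a recursion for the centered moment of the form $\mathbb{E}|X - \mathbb{E}[X]|^{2\ell} \lesssim (c_k \, \ell^k \sqrt{\mathbb{E}[X]\,\mathbb{E}_{\geq 1}[X]})^{2\ell}$, where the constant $c_k$ accumulates a bounded combinatorial factor at each of the $k$ levels of induction; tracking these factors is what eventually produces $a_k = 8^k \sqrt{k!}$. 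Finally, applying $\mathbb{P}(|X - \mathbb{E}[X]| > t) \leq t^{-2\ell}\, \mathbb{E}|X-\mathbb{E}[X]|^{2\ell}$ and optimizing the free integer at roughly $\ell \approx \lambda$ turns the $\ell^k$ in the moment bound into the $\lambda^k$ factor in the deviation, while the union over which of the $m$ coordinates' derivatives are large at each level contributes the $(k-1)\log m$ term inside the exponent.

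The hardest part, and the step I expect to consume most of the work, is the conditional-moment estimate for the martingale differences: quantifying precisely how the sensitivity of $X$ to a single coordinate aggregates across the $2\ell$ factors of a high moment while keeping the dependence on $\mathbb{E}_{\geq 1}[X]$ rather than on a crude worst-case Lipschitz constant. Securing the sharp polynomial-in-$\lambda$ growth rather than an exponential one is exactly the Kim--Vu refinement over a naive Azuma/McDiarmid argument, and it is where total positivity and the inductive derivative structure must be combined with care.
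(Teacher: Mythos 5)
The first thing to note: the paper does not prove this proposition at all. It is imported verbatim from Kim and Vu \cite{KimVu2000concentration} and used as a black-box tool (the appendix only \emph{applies} it to the polynomials $Y_0, Y_3, S_1, D_1, D_2, T_1, T_2$). So there is no internal proof to compare yours against; what you have attempted is a reconstruction of a substantial external theorem, and it has to be judged on its own merits.

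Your overall architecture --- multilinear reduction using $t_i^2 = t_i$, a Doob martingale exposing the coordinates, induction on the degree $k$ through the partial derivatives, and a moment/Markov conversion with $\ell \approx \lambda$ --- is a reasonable outline of how Kim--Vu-type arguments actually run. But the step you yourself flag as indispensable is stated backwards, and this is a genuine gap. Total positivity makes $X$ and every $\partial^\alpha X$ coordinatewise \emph{nondecreasing}; consequently, conditioning the already-revealed variables to specific values (worst case, all equal to $1$) can only \emph{increase} the conditional expectation of a derivative. Your claim that the conditional expectation of a derivative under a partial assignment ``is dominated by its unconditional expectation, and hence by $\mathbb{E}_{\geq 1}[X]$'' is false --- the inequality runs the other way. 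What positivity actually buys is this: the conditional expectation is at most its evaluation with all revealed variables set to $1$, and by the Boolean/multilinear structure that evaluation expands as a sum $\sum_{\beta} \mathbb{E}\left[\partial^{\alpha+\beta} X\right]$ over multi-indices $\beta$ supported on the revealed coordinates. Each summand of order at least $1$ is indeed bounded by $\mathbb{E}_{\geq 1}[X]$, but the number of nonzero summands can be of order $m^{k - \pnorm{\alpha}{1}}$, and it is precisely this polynomial-in-$m$ multiplicity that must generate the $(k-1)\log m$ loss in the exponent. This also exposes an internal inconsistency in your assembly: a pure moment computation admits no ``union bound,'' and the moment recursion you claim, $\mathbb{E}\lvert X - \mathbb{E}[X]\rvert^{2\ell} \lesssim \left(c_k\, \ell^k \sqrt{\mathbb{E}[X]\,\mathbb{E}_{\geq 1}[X]}\right)^{2\ell}$, contains no $m$ whatsoever; Markov applied to it with $\ell \approx \lambda$ would yield a tail of the form $\mathcal{O}(e^{-c\lambda})$ with no $\log m$ correction, which is \emph{stronger} than the theorem and not something your argument can deliver. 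The fix is to let the $m$-dependence enter the moments as a prefactor (roughly $m^{k-1}$) coming from the derivative-expansion multiplicity above, so that Markov produces $\exp(-\lambda + (k-1)\log m)$; alternatively, run the actual Kim--Vu induction with explicit exceptional events, where a union bound over coordinates is legitimate. As written, neither route is correctly in place.
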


The above proposition was used to analyze $3$-profiles of Erd\H{o}s-R\'enyi random ensembles ($G_{n,p}$) in \cite{KimVu2000concentration}. Later, this was used to derive concentration bounds for triangle sparsifiers in \cite{Tsourakakis2011sparsifier}. 
Here, we extend \S 4.3 of \cite{KimVu2000concentration} to the $3$-profile estimation process, on an arbitrary edge-sampled graph.

\begin{theorem}\label{PROFILECONCENTRATION}
(\textit{Generalization of triangle sparsifiers to $3$-profile sparsifiers}) Let $\vecn(G)=\left[n_0, \, n_1, \, n_2, \, n_3\right]$ be the $3$-profile of a graph $G(V,E)$. Let $\lvert V \rvert=n$ and $\lvert E \rvert=m$. Let $\hat{\vecn}(G) = \left[Y_0, \, Y_1, \, Y_2, \, Y_3\right]$ be the $3$-profile of the subgraph obtained by sampling each edge in $G$ with probability $p$. Let $\alpha,\beta$ and $\Delta$ be the largest collection of $H_1$'s, wedges and triangles that share a common edge. Define $\mathbf{X}(G)$ according to \eqref{eq:estimatorStart}--\eqref{eq:estimatorEnd}, $\epsilon > 0$, and $\gamma>0$. If $p,\epsilon$ satisfy:
\begin{align}\label{eqn:thmconds}
\frac{n_0}{3 \max\{\alpha,\beta,\Delta\}} &\geq \frac{a_3^2 \log^6 \left(m^{2+\gamma} \right)}{\epsilon^2} \nonumber \\
\frac{p} { \max \{\frac{1}{\sqrt[3]{n_3}},  \Delta/n_3 \} } &\geq \frac{a_3^2 \log^6 \left(m^{2+\gamma} \right)}{\epsilon^2} \nonumber \\
\frac{p} {\alpha/n_1 } &\geq \frac{a_1^2 \log^2 \left(m^{\gamma} \right)}{\epsilon^2} \nonumber \\
\frac{p}{\max \{\frac{\beta}{n_2},\frac{1}{\sqrt{n_2}} \}}  &\geq  \frac{a_2^2 \log^4 \left(m^{1+\gamma} \right)}{\epsilon^2} ,
\end{align}
then $\pnorm{\mathbf{X}(G)-\vecn(G)}{\infty} \leq 12\epsilon {|V| \choose 3}$ with probability at least $1 - \frac{1}{m^{\gamma}}$.
\end{theorem}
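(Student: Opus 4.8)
The plan is to sidestep the failure of total positivity by writing each unbiased estimator \eqref{eq:estimatorStart}--\eqref{eq:estimatorEnd} as a fixed linear combination of a small set of \emph{totally positive} Boolean polynomials in the edge indicators $\{t_e\}_{e\in E}$ (with $t_e=1$ iff $e$ survives the sampling). The three blocks I would use are monotone counts of the sampled graph: the edge count $S_1=\sum_e t_e$ (degree $1$), the cherry count $W_2=\sum t_{e_1}t_{e_2}$ over pairs of adjacent edges (degree $2$; it counts each induced wedge once and each triangle three times), and the triangle count $Y_3$ (degree $3$). Each of these has nonnegative coefficients, so Proposition~\ref{KIMVUCONCENTRATION} applies to it directly; by contrast the induced counts $Y_1,Y_2$, and hence the raw forms of $X_1,X_2$, carry the offending negative monomials such as $-t_{e_1}t_{e_2}t_{e_3}$ that place them outside the scope of the proposition.

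The first step is purely algebraic. Counting edge--triple incidences two ways gives $Y_1+2Y_2+3Y_3=(n-2)S_1$, and grouping cherries by their center vertex gives $W_2=Y_2+3Y_3$. Substituting $Y_2=W_2-3Y_3$ and $Y_1=(n-2)S_1-2W_2+3Y_3$ into \eqref{eq:estimatorStart}--\eqref{eq:estimatorEnd} and collecting terms -- the mixed terms collapse via $p+(1-p)=1$ -- yields
\begin{align*}
X_3 &= \tfrac{1}{p^3}Y_3, \qquad X_2 = \tfrac{1}{p^2}W_2 - \tfrac{3}{p^3}Y_3, \\
X_1 &= \tfrac{n-2}{p}S_1 - \tfrac{2}{p^2}W_2 + \tfrac{3}{p^3}Y_3,
\end{align*}
with $X_0 = \binom{n}{3} - X_1 - X_2 - X_3$ by complementarity. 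Unbiasedness is re-confirmed from $\E{S_1}=pm$, $\E{W_2}=p^2(n_2+3n_3)$, $\E{Y_3}=p^3 n_3$, and the identity $(n-2)m=n_1+2n_2+3n_3$.

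The second step applies Proposition~\ref{KIMVUCONCENTRATION} to each of $S_1,W_2,Y_3$, for which I would compute the two functionals it needs. The first-order and top-order derivatives dominate: $\mathbb{E}_{\geq 1}[Y_3]=\max\{p^2\Delta,1\}$ (the first derivative is largest on the edge lying in the most triangles, giving $p^2\Delta$; the third derivative contributes $1$), and likewise $\mathbb{E}_{\geq 1}[W_2]=\max\{p\beta,1\}$ and $\mathbb{E}_{\geq 1}[S_1]=1$, while $m\ge n_1/\alpha$ follows from the incidence identity. The last three lines of \eqref{eqn:thmconds} are exactly what verifies the hypothesis $\mathbb{E}[P]\ge\mathbb{E}_{\geq 1}[P]$ for each block $P$ and simultaneously forces the deviation to be small: taking $\lambda=\log\!\big(m^{(k-1)+\gamma}\big)$ for a degree-$k$ block turns the tail of \eqref{eq:kvtheorem} into $\mathcal{O}(m^{-\gamma})$, and squaring the Kim--Vu deviation $a_k\sqrt{\mathbb{E}[P]\,\mathbb{E}_{\geq 1}[P]}\,\lambda^k$ when solving for $p$ produces the $a_1^2,a_2^2,a_3^2$ constants and the $\log^2,\log^4,\log^6$ factors. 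The two sub-cases inside $\max\{\Delta/n_3,\,1/\sqrt[3]{n_3}\}$ and $\max\{\beta/n_2,\,1/\sqrt{n_2}\}$ are precisely whether $\mathbb{E}_{\geq 1}$ is set by the first-order or by the top-order derivative.

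Finally I would assemble the coordinates. Under \eqref{eqn:thmconds} each block, scaled by its coefficient, deviates from its mean by at most $\epsilon\binom{n}{3}$ (for $S_1$ one normalizes using $(n-2)m=n_1+2n_2+3n_3\le 3\binom{n}{3}$). A triangle inequality over the at most three blocks appearing in $X_1,X_2,X_3$, and then over the three summands of $X_0=\binom{n}{3}-X_1-X_2-X_3$, accumulates to the stated constant $12$, giving $\pnorm{\mathbf{X}(G)-\vecn(G)}{\infty}\le 12\epsilon\binom{|V|}{3}$; a union bound over the constantly many blocks and coordinates keeps the total failure probability at $m^{-\gamma}$, and the first line of \eqref{eqn:thmconds} (in $n_0$ and $\max\{\alpha,\beta,\Delta\}$) governs the aggregated $X_0$ error. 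The main obstacle is the very first step -- realizing that the non--totally-positive induced counts $Y_1,Y_2$ can be eliminated in favour of the monotone counts $S_1,W_2,Y_3$, which is what makes Proposition~\ref{KIMVUCONCENTRATION} usable at all. The secondary difficulty is the careful evaluation of $\mathbb{E}_{\geq 1}[\cdot]$ for each block and matching the resulting requirements line-by-line to \eqref{eqn:thmconds}.
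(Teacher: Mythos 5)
Your strategy is in essence the paper's own: eliminate the non--totally-positive counts $Y_1,Y_2$ in favour of monotone polynomials and apply Proposition~\ref{KIMVUCONCENTRATION} blockwise. Indeed, your two identities are the paper's decomposition after merging: the paper's blocks satisfy $S_1+D_1+T_1=(n-2)\sum_e t_e$ and $D_2+T_2=W_2$, so your reduced expressions for $X_1,X_2,X_3$ coincide with the paper's \eqref{eqn:X1}--\eqref{eqn:X3}. However, two concrete steps fail as written. First, the claim $\mathbb{E}_{\geq 1}[W_2]=\max\{p\beta,1\}$ is wrong: differentiating $W_2$ with respect to $t_e$ leaves $\sum_{f\sim e}t_f$, a sum over \emph{all} edges adjacent to $e$, and every triangle through $e$ contributes \emph{two} such edges, so the first-order term is $p(\beta_e+2\Delta_e)$, giving $\mathbb{E}_{\geq 1}[W_2]\leq\max\{p(\beta+2\Delta),1\}$. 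In a triangle-dense graph ($\Delta\gg\beta$) your bound is off by an unbounded factor, and the fourth line of \eqref{eqn:thmconds} alone can certify neither $\E{W_2}\geq\mathbb{E}_{\geq 1}[W_2]$ nor the relative concentration of $W_2$; you would additionally need the second line (the $\Delta/n_3$ condition), via estimates like $\frac{\beta+2\Delta}{n_2+3n_3}\leq\frac{\beta}{n_2}+\frac{2\Delta}{3n_3}$ together with $a_3^2\log^6(m^{2+\gamma})\geq a_2^2\log^4(m^{1+\gamma})$, at the cost of constant factors. This bookkeeping is exactly what the paper avoids by keeping the wedge block $D_2$ (controlled by $\beta$) and the triangle block $T_2$ (controlled by $\Delta$) separate.

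Second, your constant does not come out to $12$. Granting $(1\pm\epsilon)$ concentration of each block, one gets $\delta X_3\leq\epsilon n_3$, $\delta X_2\leq\epsilon(n_2+6n_3)$, and $\delta X_1\leq\epsilon(n_1+4n_2+12n_3)$, matching the paper; but your complementarity step $X_0=\binom{n}{3}-X_1-X_2-X_3$ then yields $\delta X_0\leq\epsilon(n_1+5n_2+19n_3)$, which in triangle-dominated graphs is $19\epsilon\binom{|V|}{3}$, exceeding the claimed $12\epsilon\binom{|V|}{3}$. The paper instead treats $Y_0$ as a totally positive polynomial in the flipped variables $y_e=1-t_e$ (Lemma~\ref{lem:Y0}), obtaining $\delta X_0\leq\epsilon(n_0+n_1+3n_2+7n_3)\leq 7\epsilon\binom{|V|}{3}$; this is precisely where the first line of \eqref{eqn:thmconds} (the $n_0$ condition, used through $Y_0\geq n_0$) enters. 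In your write-up that condition is never actually used, so your closing remark that it ``governs the aggregated $X_0$ error'' is inconsistent with your own construction. To repair the proof, either adopt the paper's $y_e$ change of variables to handle $X_0$ directly, or accept a weaker constant (equivalently, rescale $\epsilon$, which perturbs the constants in \eqref{eqn:thmconds}).
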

\begin{proof}
Full proof can be found in the appendix. Note that, as we mentioned, this proof uses a new polynomial decomposition technique to apply the Kim-Vu concentration.
\end{proof}

The sampling probablilty $p$ in Theorem \ref{PROFILECONCENTRATION} depends poly-logarithmically on the number of edges and linearly on the fraction of each subgraph which occurs on a common edge. For example, if all of the wedges in $G$ depend on a single edge, \textit{i.e.} $\beta = n_2$, then the last equation suggests the presence of that particular edge in the sampled graph will dominate the overall sparsifier quality.

\section{Local 3-profile Calculation}\label{sec:local3}
In this section, we describe how to obtain two types of $3$-profiles for a given graph $G$ in a deterministic manner. These algorithms are distributed
and can be applied independently of the edge sampling described in Section \ref{sec:concentration}.

The key to our approach is to identify subgraphs at a vertex based on the degree with which it participates in the subgraph. From the perspective of a given vertex $v$, there are actually six distinct $3$ node subgraphs up to isomorphism as given in Figure \ref{fig:local3prof}.
Let $n_{0,v},n_{1,v}^{e},n_{1,v}^d,n_{2,v}^e,n_{2,v}^d$, and $n_{3,v}$ denote the corresponding local subgraph counts at $v$. We will first outline an approach that calculates these counts and then add across different vertex perspectives to calculate the final $4$ scalars ($n_{2,v}=n_{2,v}^e+n_{2,v}^c$ and $n_{1,v}=n_{1,v}^e+n_{1,v}^d$). It is easy to see that the global counts can be obtained from these local counts by summing across vertices:  
  \begin{align}
    n_i= \frac{1}{3} \left( \sum \limits_{v \in V} n_{i,v}\right), ~ \forall i .
   \end{align} 
   
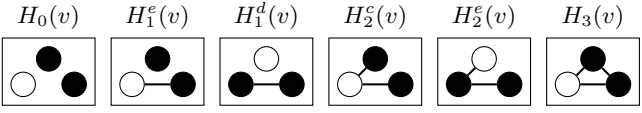
\begin{figure}[t]
\centering
\begin{tikzpicture}[
inner/.style={circle,draw,fill=black,inner sep=1pt, minimum size=1em},
outer/.style= {draw
}
]
  \matrix (w0) [matrix of nodes, outer, nodes={inner}, label=$H_0(v)$]{
         &{} \\
        |[fill=white]| {} & & {}\\
      };
    \matrix (w1) [matrix of nodes, outer, nodes={inner}, right=\smalldist of w0, label=$H_1^{e}(v)$]{
       &{} \\
      |[fill=white]| {} & & {}\\
    };
  \matrix (w1d) [matrix of nodes, outer, nodes={inner},right=\smalldist of w1, label=$H_1^{d}(v)$]{
         & |[fill=white]|{} \\
         {} & & {}\\
      };
   \matrix (w2) [matrix of nodes, outer, nodes={inner},right=\smalldist of w1d, label=$H_2^c(v)$]{
       &{} \\
      |[fill=white]|{} & & {}\\
    };
    \matrix (w2e) [matrix of nodes, outer, nodes={inner},right=\smalldist of w2, label=$H_2^{e}(v)$]{
        & |[fill=white]|{} \\ 
           {} & & {}\\
        };
  \matrix (w3) [matrix of nodes, outer, nodes={inner},right=\smalldist of w2e, label=$H_3(v)$]{
       &{} \\
	      |[fill=white]| {} & & {}\\
    };
\draw[thick] (w1-2-1)--(w1-2-3);
\draw[thick] (w1d-2-1)--(w1d-2-3);
  \draw[thick] (w2-2-3)--(w2-2-1)--(w2-1-2);
    \draw[thick] (w2e-2-3)--(w2e-2-1)--(w2e-1-2);
  \draw[thick] (w3-2-1)--(w3-1-2)--(w3-2-3)--(w3-2-1);
\end{tikzpicture}
\caption{Unique $3$-subgraphs from vertex perspective (white vertex corresponds to $v$).}
\label{fig:local3prof}
\end{figure}

\subsection{Distributed Local 3-profile}
We will now give our approach for calculating the local $3$-profile counts of $G(V,E)$ using only local information combined with $|V|$ and $|E|$.

\noindent \textbf{Scatter:}
We assume that every edge $(v,a)$ has access to the neighborhood sets of both $v$ and $a$, i.e. $\Gamma(v)$ and $\Gamma(a)$. Therefore, intersection sizes are first calculated at every edge, \textit{i.e.} $|\Gamma(v) \cap \Gamma(a)|$. Each edge computes the following scalars and stores them: 
\begin{align}\label{eq:scat1}
n_{3,va} &= |\Gamma(v) \cap \Gamma(a)|,~ n_{2,va}^e = |\Gamma(v)| - |\Gamma(v) \cap \Gamma(a)| -1. \nonumber \\
n_{2,va}^c &= |\Gamma(a)| - |\Gamma(v) \cap \Gamma(a)| - 1. \nonumber \\
n_{1,va} &= |V| - (|\Gamma(v)| + |\Gamma(a)| - |\Gamma(v)\cap \Gamma(a)|).
\end{align}
 The computational effort at every edge is at most $\mathcal{O}(d_{\mathrm{max}})$, where $d_{\mathrm{max}}$ is the maximum degree of the graph, for the neighborhood intersection size.

\noindent \textbf{Gather:}
In the next round, vertex $v$ ``gathers''  the above scalars in the following way:
\begin{align}
\begin{split}
\label{eq:app2}
n_{2,v}^e &= \sum_{a \in \Gamma(v)} n_{2,va}^e , ~~
n_{1,v}^e = \sum_{a \in \Gamma(v)} n_{1,va}. \\
n_{2,v}^c & \overset{a}= \frac{1}{2}\sum_{a \in \Gamma(v)}  n_{2,va}^c, ~~ 
n_{3,v} \overset{b}= \frac{1}{2} \sum_{a \in \Gamma(v)} n_{3,va}.  \\
n_{1,v}^d & \overset{c}= |E| - |\Gamma(v)| - n_{3,v} - n_{2,v}^e .\\
n_{0,v} &= \binom{|V| - 1}{2} - n_{1,v}-n_{2,v}-n_{3,v}.
\end{split}
\end{align}
Here, relations (a) and (b) are because triangles and wedges from center are double counted. (c) comes from noticing that each triangle and wedge from endpoint excludes an extra edge from forming $H_1^d(v)$. In this gather stage, the communication complexity is $\mathcal{O}(M)$ where it is assumed that $\Gamma(v)$ is stored over $M$ different machines. The corresponding distributed algorithm is described in Algorithm \ref{alg:3plocalGAS}.

\subsection{Distributed Ego 3-profile}
In this section, we give an approach to compute ego $3$-profiles for a set of vertices ${\cal V} \subseteq V$ in $G$. 
For each vertex $v$, the algorithm returns a $3$-profile corresponding to that vertex's ego $N(v)$, a subgraph induced by the neighborhood set $\Gamma(v)$, including edges between neighbors and excluding $v$ itself. Formally, our goal is to compute $\{ \mathbf{n} (N(v)) \}_{v \in {\cal V}}$. 
Clearly, this can be accomplished in two steps repeated serially on all $v \in {\cal V}$: first obtain the ego subgraph $N(v)$ and then pass as input to Algorithm \ref{alg:3plocalGAS}, summing over the ego vertices $\Gamma(v)$ to get a global count.
The serial implementation is provided in Algorithm \ref{alg:3egoSerial}. We note that this was essentially done in \cite{Ugander2013}, where ego subgraphs were extracted from a common graph separately from $3$-profile computations.

\begin{figure}
\centering
\begin{tikzpicture}[
inner/.style={circle,draw,fill=black,inner sep=1pt, minimum size=1em},
outer/.style= {draw
}
]
	\matrix (w6) [matrix of nodes, outer, nodes={inner}, label=below:$F_0(v)$, column sep=10pt, row sep=10pt]{
	{} &{} \\ 
	|[fill=white]|{} & {}\\
	};
	\matrix (w8) [matrix of nodes, outer, nodes={inner},right=\smalldist of w6, label=below:$F_1(v)$, column sep=10pt, row sep=10pt]{
	{} &{} \\ 
	|[fill=white]|{} & {}\\
	};
	\matrix (w9) [matrix of nodes, outer, nodes={inner},right=\smalldist of w8, label=below:$F_2(v)$, column sep=10pt, row sep=10pt]{
	{} &{} \\ 
	|[fill=white]|{} & {}\\
	};
	\matrix (w10) [matrix of nodes, outer, nodes={inner},right=\smalldist of w9, label=below:$F_3(v)$, column sep=10pt, row sep=10pt]{
	{} &{} \\ 
	|[fill=white]|{} & {}\\
	};
\draw[thick] (w6-1-1)--(w6-2-1)--(w6-2-2);
\draw[thick] (w6-1-2)--(w6-2-1);
\draw[thick] (w8-1-1)--(w8-2-1)--(w8-2-2)--(w8-1-2)--(w8-2-1);
\draw[thick] (w9-1-1)--(w9-2-1)--(w9-2-2)--(w9-1-2)--(w9-2-1)--(w9-1-1)--(w9-1-2);
\draw[thick] (w10-1-1)--(w10-2-1)--(w10-2-2)--(w10-1-2)--(w10-1-1)--(w10-2-2)--(w10-1-2)--(w10-2-1);
\end{tikzpicture}
\caption{$4$-subgraphs for Ego $3$-profiles (white vertex corresponds to $v$).} \label{fig:4profiles}
\end{figure}
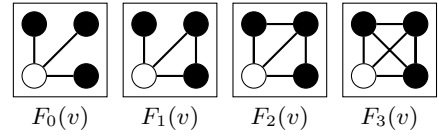

Instead, Algorithm \ref{alg:3egoParallel} provides a parallel implementation which solves the problem by finding cliques in parallel for all $v\in {\cal V}$.
The main idea behind this approach is to realize that calculating the $3$-profile on the induced subgraph $N(v)$ is exactly equivalent to computing specific $4$-node subgraph frequencies among $v$ and $3$ of its neighbors, enumerated as $F_i(v), ~0 \leq i \leq 3$ in Figure \ref{fig:4profiles}. Now, the aim is to calculate $F_i(v)$'s, effectively part of a local $4$-profile.

\noindent \textbf{Scatter:}
We assume that every edge $(v,a)$ has already computed the scalars 
from (\ref{eq:scat1}). Additionally, every edge $(v,a)$ also computes the list ${\cal N}_{va}=\Gamma(v) \cap \Gamma(a)$ instead of only its size. The computational complexity is still $\mathcal{O}(d_{\mathrm{max}})$. 

\noindent \textbf{Gather:}
First, the vertex ``gathers'' the following scalars, forming three \textit{edge pivot equations} in unknown variables $F_i(v)$:
\begin{align}
\begin{split}
\sum_{a \in \Gamma(v)} {n_{2,va}^c \choose 2}  &= 3F_0(v) + F_1(v)  \\ 
\sum_{a \in \Gamma(v)} {n_{3,va} \choose 2} &= F_2(v) + 3F_3(v) \\
\sum_{a \in \Gamma(v)} n_{2,va}^c n_{3,va} &= 2F_1(v) + 2F_2(v) \label{eq:trisq} 
\end{split}
\end{align}

By choosing two subgraphs that the edge $(v,a)$ participates in, and then summing over neighbors $a$, these equations gather implicit connectivity information $2$ hops away from $v$.
However, note that there are only three equations in four variables and we must count one of them directly, namely the number of $4$-cliques $F_3(v)$. Therefore, at the same gather step, the vertex also creates the list ${\cal CN}_v=\bigcup_{a \in \Gamma(v), p \in {\cal N}_{va}} (a,p)$.  
Essentially, this is the list of edges in the subgraph induced by $\Gamma(v)$. 
This requires worst case communication proportional to the number of edges in $N(v)$, independent of the number of machines $M$.

\noindent \textbf{Scatter:}
Now, at the next scatter stage, each edge $(v,a)$ accesses the pair of lists ${\cal CN}_v,{\cal CN}_a$. Each edge $(v,a)$ computes 
the number of $4$-cliques it is a part of, defined as follows: 
  \begin{equation}\label{eq:n4edge}
     n_{4,va}= \sum \limits_{i,j \in \Gamma(v) \cap \Gamma(a) } \mathbf{1} \left( (i,j) \in {\cal CN}_v \right).
  \end{equation}
This incurs a computation time of $|{\cal CN}_v|$. 

\noindent \textbf{Gather:}
In the final gather stage, every vertex $v$ accumulates these scalars to get $F_{3}(v)= \frac{1}{3} \sum \limits_{a \in \Gamma(v)} n_{4,va}$ requiring $\mathcal{O}(M)$ communication time. As in the previous section, the scaling accounts for extra counting. Finally, the vertex solves the equations \eqref{eq:trisq} using $F_3(v)$.

\section{Implementation and Results}
In this section, we describe the implementation and the experimental results of the $\prof$, $\egopar$ and $\egoser$ algorithms. We implement our algorithms on GraphLab v2.2 (PowerGraph) \cite{powergraphGAS2012}. The performance (running time and network usage) of our $\prof$ algorithm is compared with the Undirected Triangles Count Per Vertex (hereinafter referred to as $\trian$) algorithm shipped with GraphLab. We show that in time and network usage comparable to the built-in $\trian$ algorithm, our $\prof$ can calculate all the local and global $3$-profiles. Then, we compare our parallel implementation of the ego $3$-profile algorithm, $\egopar$, with the naive serial implementation, $\egoser$. It appears that our parallel approach is much more efficient and scales much better than the serial algorithm. The sampling approach, introduced for the $\prof$ algorithm, yields promising results -- reduced running time and network usage while still providing excellent accuracy. We support our findings with several experiments over various data sets and systems.

\noindent \textbf{Vertex Programs:}
Our algorithms are implemented using a standard GAS (gather, apply, scatter) model \cite{powergraphGAS2012}. We implement the three functions \texttt{gather()}, \texttt{apply()}, and \texttt{scatter()} to be executed by each vertex. Then we signal subsets of vertices to run in a specific order.

\begin{algorithm}[ht]
    \caption{\prof}
   \label{alg:3plocalGAS}
\begin{algorithmic}
    \STATE {\bfseries Input:} Graph $G(V,E)$ with $|V|$ vertices, $|E|$ edges
    \STATE {\bfseries Gather:} For each vertex $v$ union over edges of the `other' vertex in the edge, $\cup_{a \in \Gamma(v)} a = \Gamma(v)$.
	\STATE {\bfseries Apply:} Store the gather as vertex data $\texttt{v.nb}$, size automatically stored.
	\STATE {\bfseries Scatter:} For each edge $e_{va}$, compute and store scalars in \eqref{eq:scat1}.
	\STATE {\bfseries Gather:} For each vertex $v$, sum edge scalar data of neighbors 
	\INDSTATE[1] $\texttt{g} \gets \sum_{(v,a) \in \Gamma(v)} \texttt{e.data}$.
	\STATE {\bfseries Apply:} For each vertex $v$, calculate and store the quantities described in \eqref{eq:app2}.
    \RETURN [\texttt{v: v.n0 v.n1 v.n2 v.n3}]
\end{algorithmic}
\end{algorithm}

\begin{algorithm}[ht]
   \caption{\egoser}
   \label{alg:3egoSerial}
\begin{algorithmic}
   \STATE {\bfseries Input:} Graph $G(V,E)$ with $|V|$ vertices, $|E|$ edges, set of ego vertices ${\cal V}$
	\FOR{$v \in {\cal V}$} 
		\STATE Signal $v$ and its neighbors.
		\STATE Include an edge if both its endpoints are signaled.
		\STATE Run Algorithm \ref{alg:3plocalGAS} on the graph induced by the neighbors and edges between them.
	\ENDFOR
	 \RETURN [\texttt{v: vego.n0 vego.n1 vego.n2 vego.n3}]
\end{algorithmic}
\end{algorithm}

\begin{algorithm}[ht]
   \caption{\egopar}
   \label{alg:3egoParallel}
\begin{algorithmic}
   \STATE {\bfseries Input:} Graph $G(V,E)$ with $|V|$ vertices, $|E|$ edges, set of ego vertices ${\cal V}$
		\STATE {\bfseries Gather:} For each vertex $v$ union over edges of the `other' vertex in the edge, $\cup_{e_{va}} a = \Gamma(v)$.
		\STATE {\bfseries Apply:} Store the gather as vertex data $\texttt{v.nb}$, size automatically stored.
		\STATE {\bfseries Scatter:} For each edge $e_{va}$, compute and store as edge data:
			 \INDSTATE[1] Scalars in \eqref{eq:scat1}.
			 \INDSTATE[1] The list ${\cal N}_{va}$.
		\STATE {\bfseries Gather:} For each vertex $v$, sum edge data of neighbors: 
		     \INDSTATE[1]  Acumulate LHS of \eqref{eq:trisq}. 
		     \INDSTATE[1] $\texttt{g}.{\cal CN} \gets \texttt{g}.{\cal CN} \cup {\cal N}_{va}$.
		\STATE {\bfseries Apply:} Obtain ${\cal CN}_v$ and equations in \eqref{eq:trisq} using the scalars and $\texttt{g}.{\cal CN}$.
		\STATE {\bfseries Scatter:} Scatter ${\cal CN}_v,{\cal CN}_a$ to all edges $(v,a)$.
					 \INDSTATE[1] Compute $n_{4,va}$ as in \eqref{eq:n4edge}.
		\STATE {\bfseries Gather:} Sum edge data $n_{4,va}$ of neighbors at $v$.
		\STATE {\bfseries Apply:} Compute $F_3(v)$.
   \RETURN [\texttt{v: vego.n0 vego.n1 vego.n2 vego.n3}]
\end{algorithmic}
\end{algorithm}

\noindent \textbf{The Systems:}
We perform the experiments on three systems. The first system is a single power server, further referred to as Asterix. The server is equipped with 256 GB of RAM and two Intel Xeon E5-2699 v3 CPUs, 18 cores each. Since each core has two hardware threads, up to 72 logical cores are available to the GraphLab engine.

The next two systems are EC2 clusters on AWS (Amazon Web Services) \cite{aws}. One is comprised of 12 m3.2xlarge machines, each having 30 GB RAM and 8 virtual CPUs. Another system is a cluster of 20 c3.8xlarge machines, each having 60 GB RAM and 32 virtual CPUs.

\noindent \textbf{The Data:}
In our experiments we used five real graphs. These graphs represent different datasets: social networks (LiveJournal and Twitter), citations (DBLP), knowledge content (Wikipedia), and WWW structure (PLD -- pay level domains). Graph sizes are summarized in Table \ref{tab:datsets}.

\begin{table}[h]
\centering
\caption{Datasets} \label{tab:datsets}
\begin{tabular}{|l|r|r|}
\hline
Name & Vertices & Edges (undirected)  \\ \hline
Twitter \cite{twitterKwak10www} & $41,652,230$ & $1,202,513,046$ \\
PLD \cite{pldgraph} & $39,497,204$ & $582,567,291$ \\
LiveJournal \cite{snapnets} & $4,846,609$ & $42,851,237$ \\
Wikipedia \cite{wikigraph} & $3,515,067$ & $42,375,912$ \\
DBLP \cite{snapnets} & $317,080$ & $1,049,866$ \\
\hline
\end{tabular}
\end{table}

\subsection{Results}
Experimental results are averaged over $3-10$ runs.

\noindent \textbf{Local 3-profile vs. triangle count:}
The first result is that our $\prof$ is able to compute all the local $3$-profiles in almost the same time as the GraphLab's built-in $\trian$ computes the local triangles (i.e., number of triangles including each vertex). Let us start with the first AWS cluster with less powerful machines (m3.x2large). In Figure \ref{fig:awsm3} (a) we can see that for the LiveJornal graph, for each sampling probability $p$ and for each number of nodes (i.e., machines in the cluster), $\prof$ achieves running times comparable to $\trian$. Notice also the benefit in running time achieved by sampling. We can reduce running time almost by half, without significantly sacrificing accuracy (which will be discussed shortly). While the running time is decreased as the number of nodes grows (more computing resources become available), the network usage becomes higher (see Figure \ref{fig:awsm3} (c)) due to the extensive inter-machine communication in GraphLab. We can also see that sampling can significantly reduce network usage. In Figures \ref{fig:awsm3} (b) and (d), we can see similar behavior for the Wikipedia graph: running time and network usage of $\prof$ is comparable to $\trian$.

Next, we conduct the experiments on the second AWS cluster with more powerful (c3.8xlarge) machines. For LiveJournal, we note modest improvements in running time for nearly the same network bandwidth observed in Figure \ref{fig:awsm3}. On this system we were able to run $\prof$ and $\trian$ on the much larger PLD graph. In Figures \ref{fig:awsc3} (b) and (d) we compare the running time and network usage of both algorithms. For the large PLD graph, the benefit of sampling can be seen clearly; by setting $p=0.1$, the running time of $\prof$ is reduced by a factor of $4$ and the network usage is reduced by a factor of $2$. Figure \ref{fig:aws20node} shows the performance of $\prof$ and $\trian$ on the LiveJournal and Wikipedia graphs. 
We can see that the behavior of running times and the network usage of the $\prof$ algorithm is consistently comparable to $\trian$ across the various graphs, sampling, and system parameters.

Let us now show results of the experiments performed on a single powerful machine (Asterix). Figure \ref{fig:asterixRuntimes} (a) shows the running times for $\prof$ and $\trian$ for Twitter and PLD graphs. We can see that on the largest graph in our dataset (Twitter), the running time of $\prof$ is less than $5\%$ larger than that of $\trian$, and for the PLD graph the difference is less than $3\%$ (for $p=1$). Twitter takes roughly twice as long to compute as PLD, implying that these algorithms have running time proportional to the graph's number of edges.

Finally, we show that while the sampling approach can significantly reduce the running time and network usage, it has negligible effect on the accuracy of the solution. Notice that the sampling accuracy refers to the global $3$-profile count (\textit{i.e.}, the sum of all the local $3$-profiles over all vertices in a graph).
In Figure \ref{fig:3profAcc} we show accuracy of each scalar in the $3$-profile. For the accuracy metrics, we use ratio between the exact count (obtained running $\prof$ with $p=1$) divided by the estimated count (i.e., the output of our $\prof$ when $p<1$). It can be seen that for the three graphs, all the $3$-profiles are very close to $1$. E.g., for the PLD graph, even when $p=0.01$, the accuracy is within $0.004$ from the ideal value of $1$. Error bars mark one standard deviation from the mean, and across all graphs the largest standard deviation is $0.031$. As $p$ decreases, the triangle estimator suffers the greatest loss in both accuracy and consistency.

\begin{figure}[h]
	\centering
	\subfloat[]{\includegraphics[width=0.5\columnwidth]{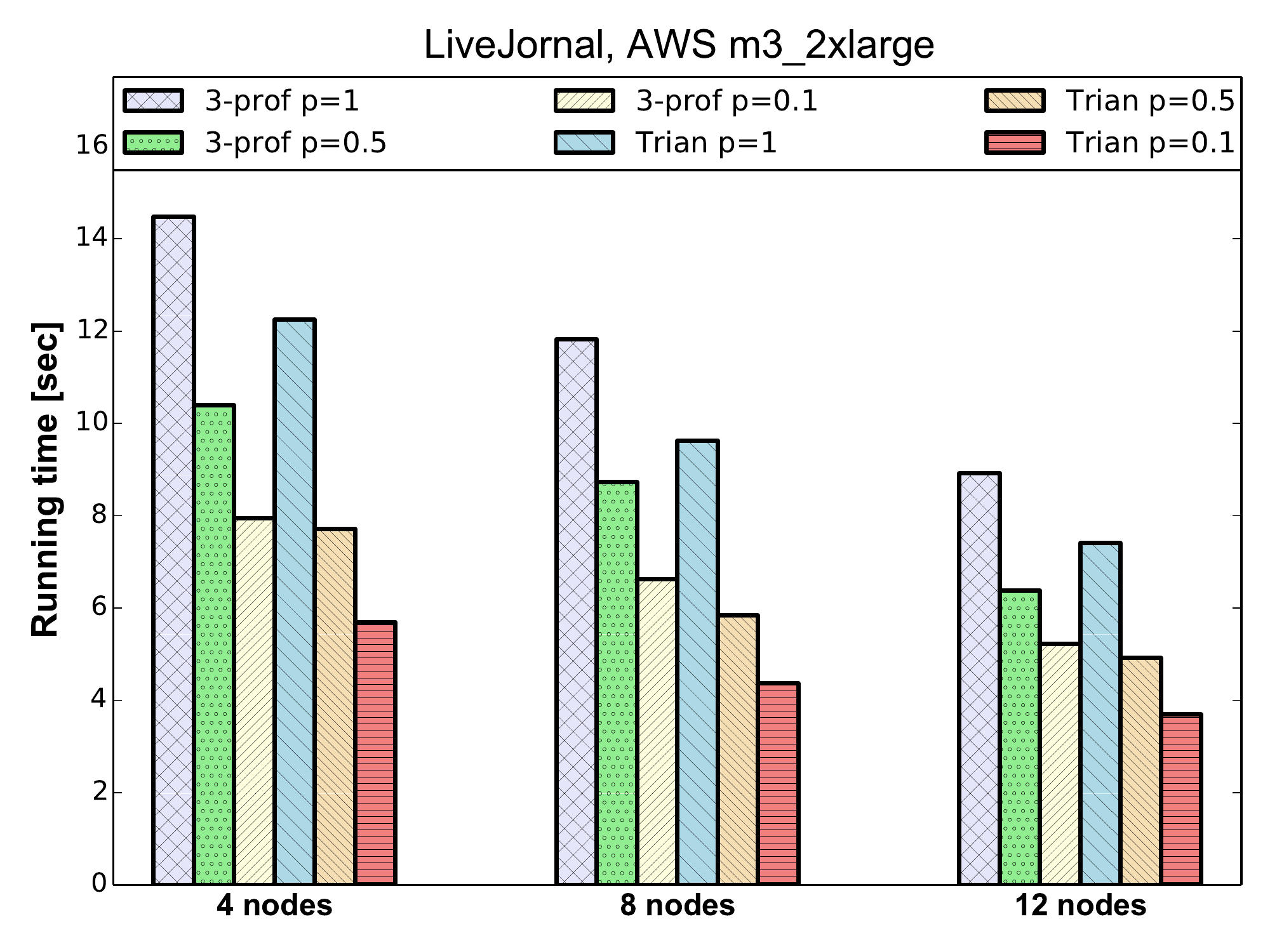}} 
	\subfloat[]{\includegraphics[width=0.5\columnwidth]{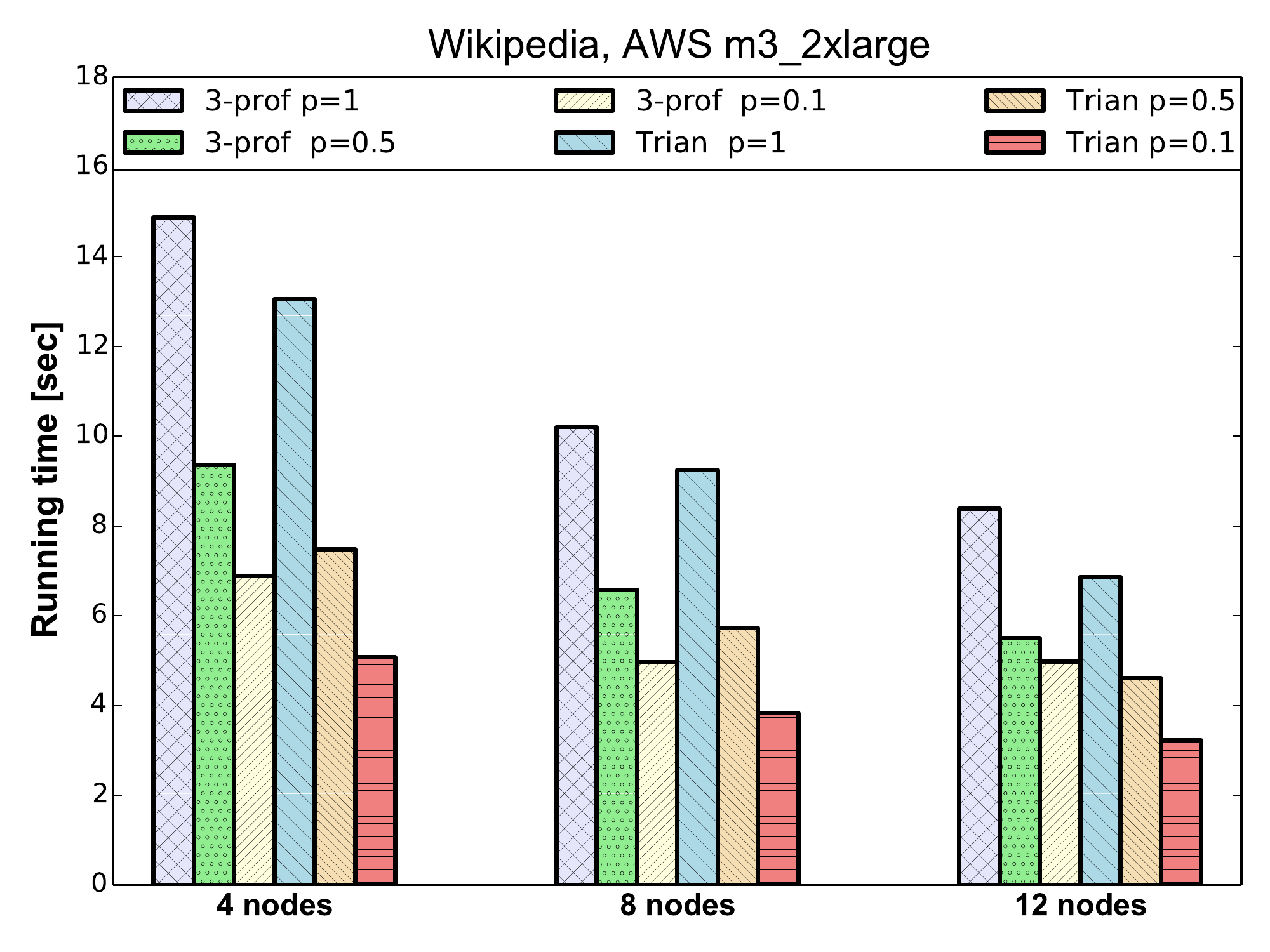}} \\ 
	\subfloat[]{\includegraphics[width=0.5\columnwidth]{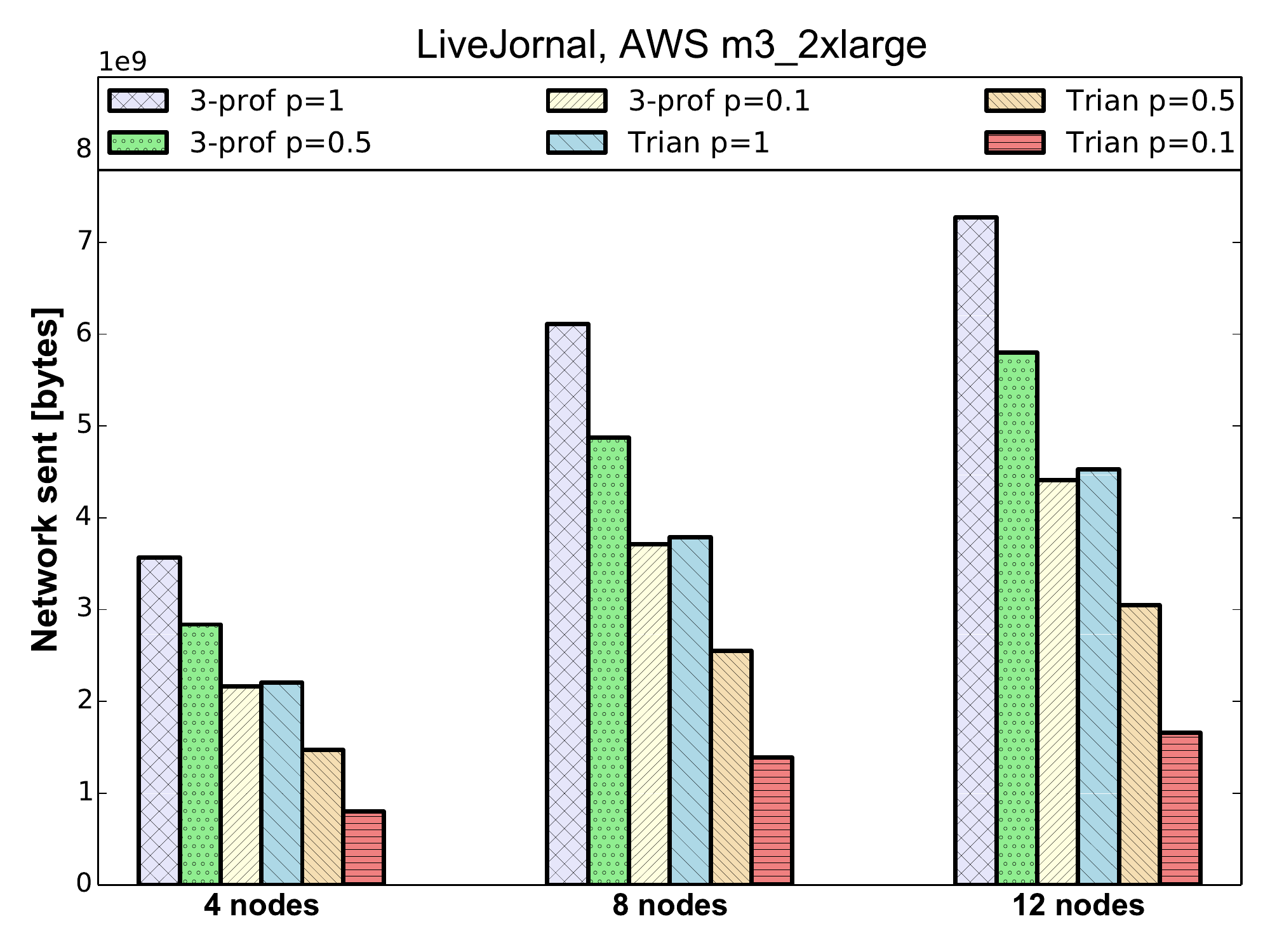}} 
	\subfloat[]{\includegraphics[width=0.5\columnwidth]{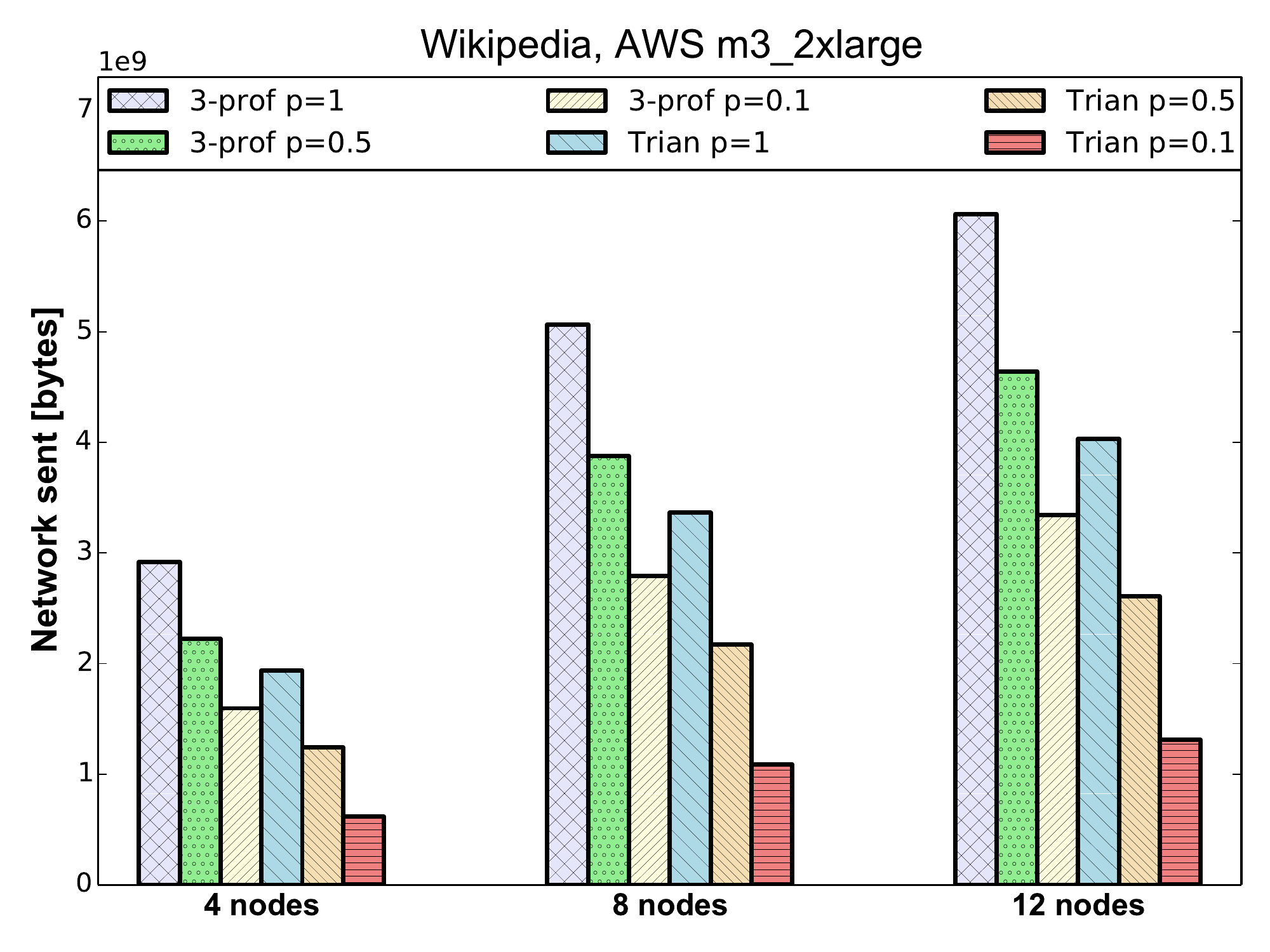}} \\ 
    \caption{\small AWS \texttt{m3\_2xlarge} cluster. $\prof$ vs. $\trian$ algorithms for LiveJournal and Wikipedia datasets (average of $3$ runs). $\prof$ achieves comparable performance to triangle counting.
    (a,b) -- Running time for various numbers of nodes (machines) and various sampling probabilities $p$.
    (c,d) -- Network bytes sent by the algorithms for various numbers of nodes and various sampling probabilities $p$.
    }
    \label{fig:awsm3}
\end{figure}

\begin{figure}[h]
	\centering
	\subfloat[]{\includegraphics[width=0.5\columnwidth]{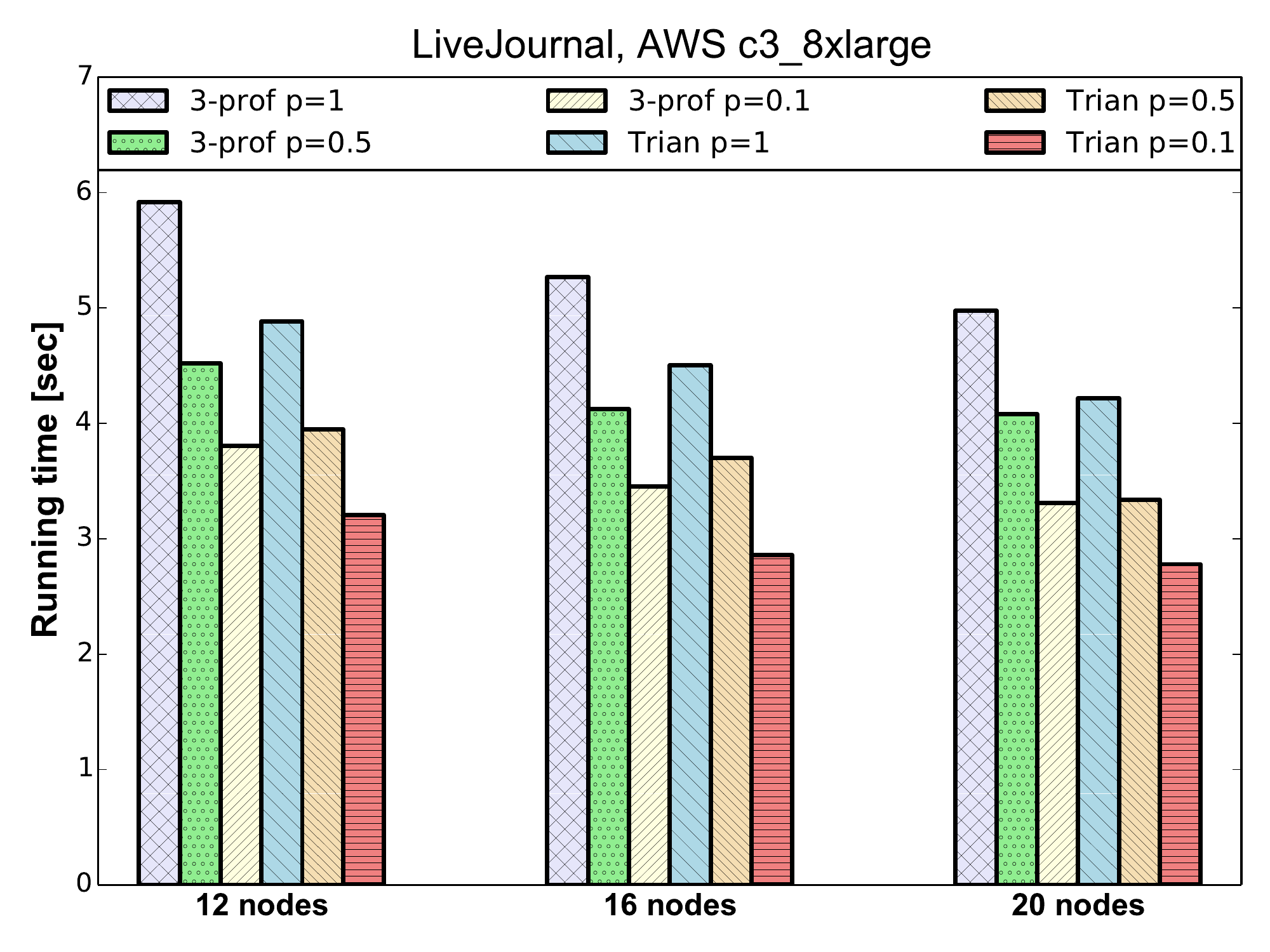}} 
	\subfloat[]{\includegraphics[width=0.5\columnwidth]{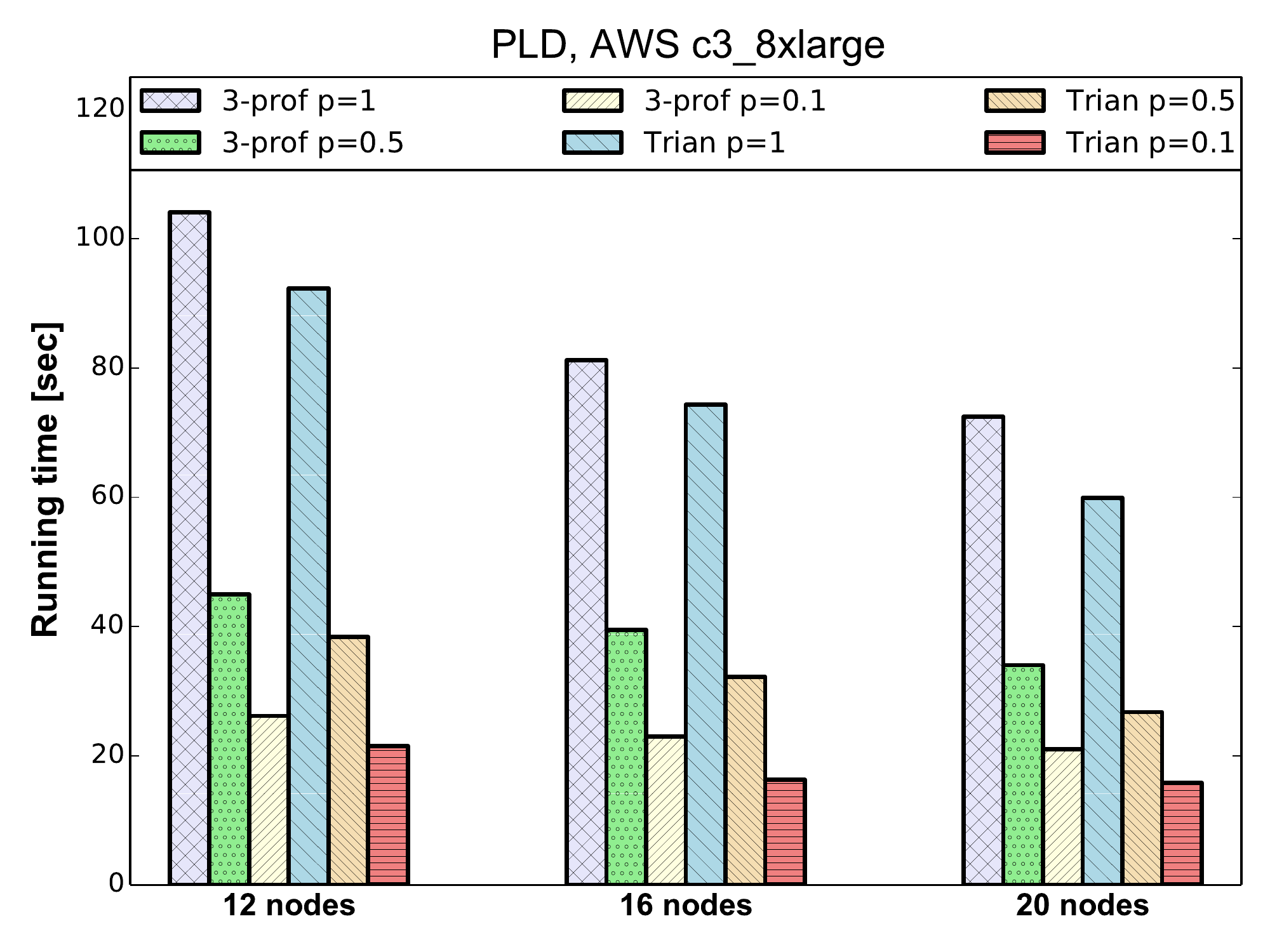}} \\ 
	\subfloat[]{\includegraphics[width=0.5\columnwidth]{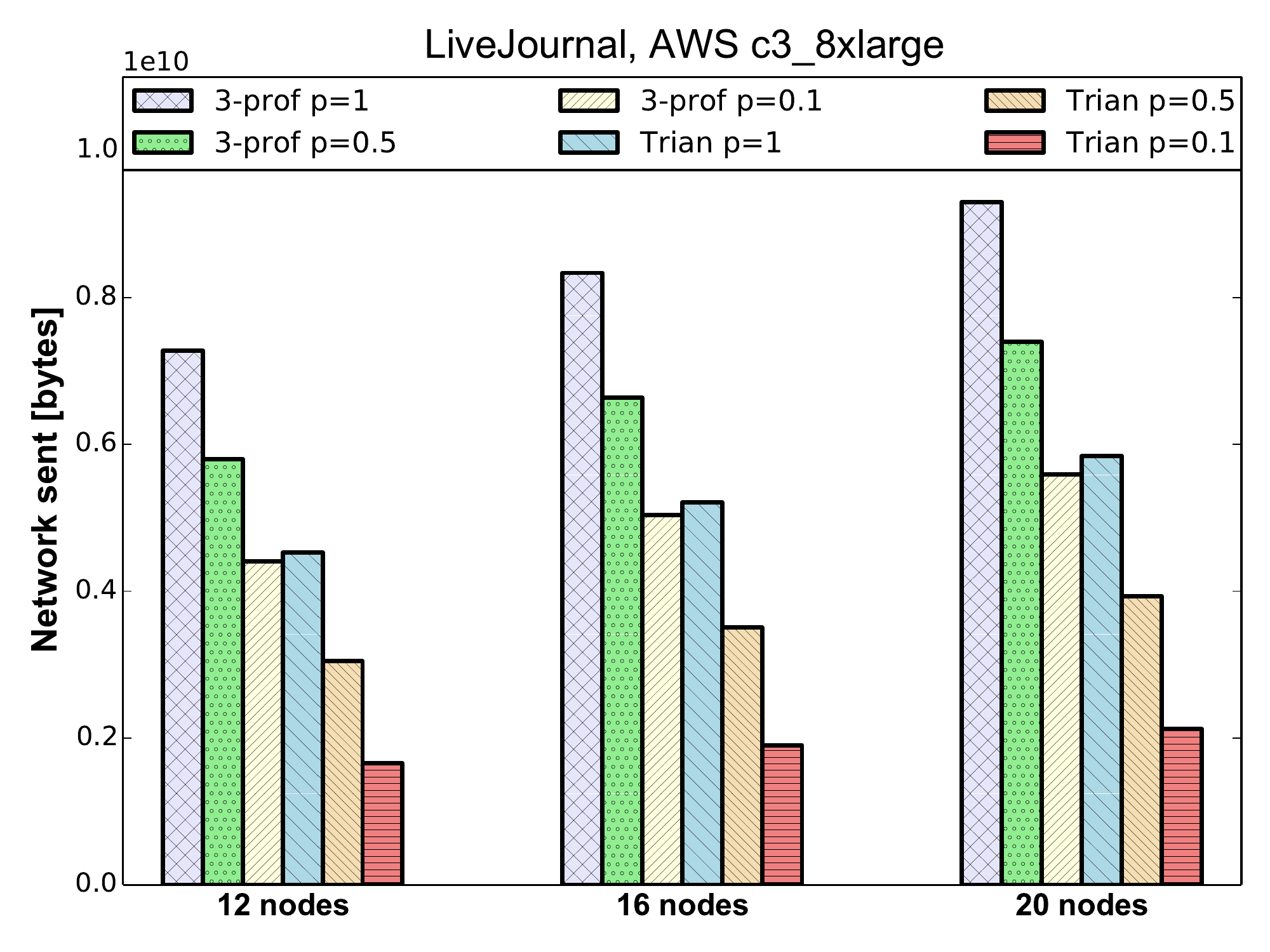}} 
	\subfloat[]{\includegraphics[width=0.5\columnwidth]{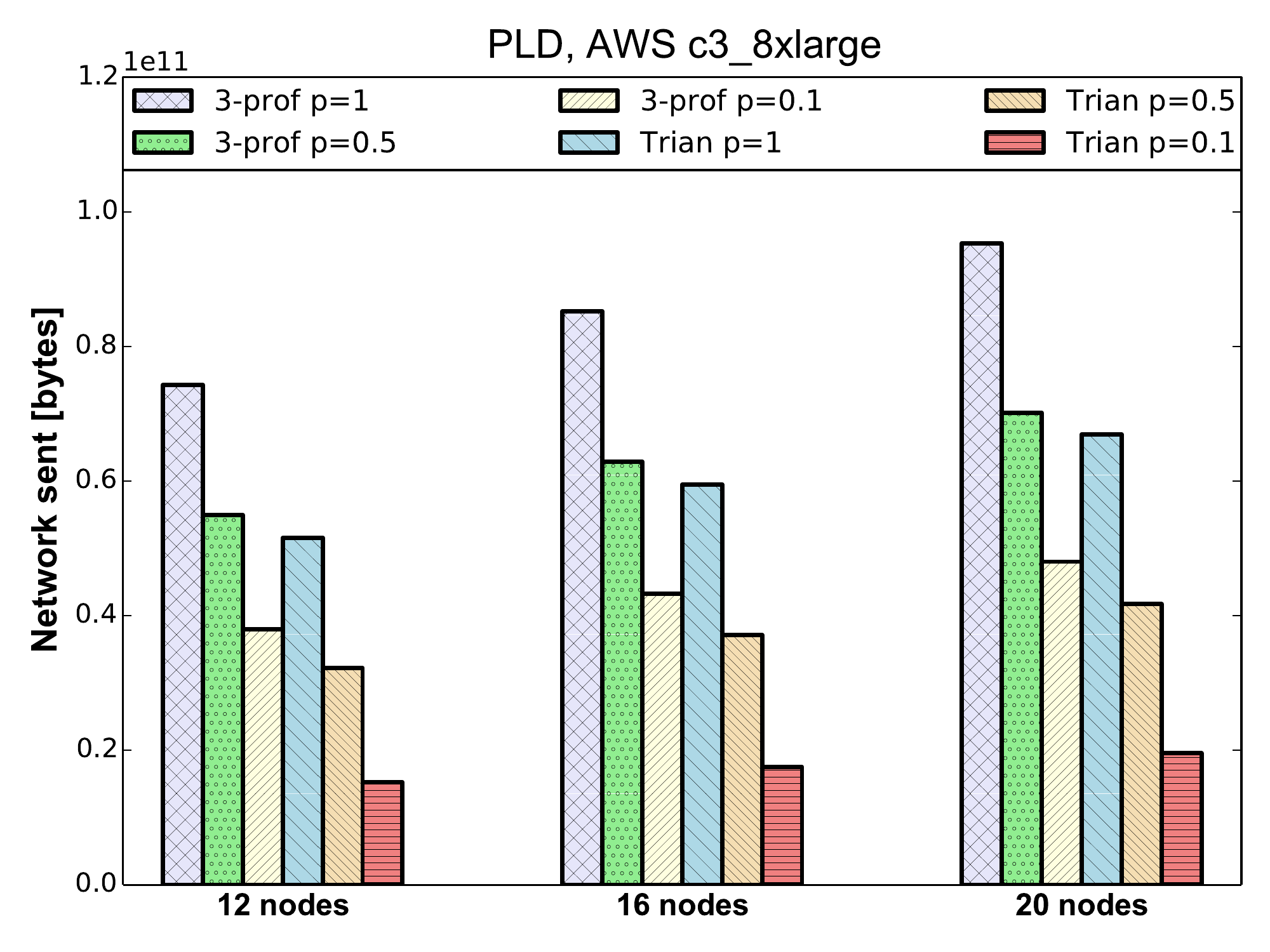}} \\
    \caption{\small AWS \texttt{c3\_8xlarge} cluster. $\prof$ vs. $\trian$ algorithms for LiveJournal and PLD datasets (average of $3$ runs). $\prof$ achieves comparable performance to triangle counting.
    (a,b) -- Running time for various numbers of nodes (machines) and various sampling probabilities $p$.
    (c,d) -- Network bytes sent by the algorithms for various numbers of nodes and various sampling probabilities $p$.
    }
    \label{fig:awsc3}
\end{figure}

\begin{figure}[h]
	\centering
	\subfloat[]{\includegraphics[width=0.5\columnwidth]{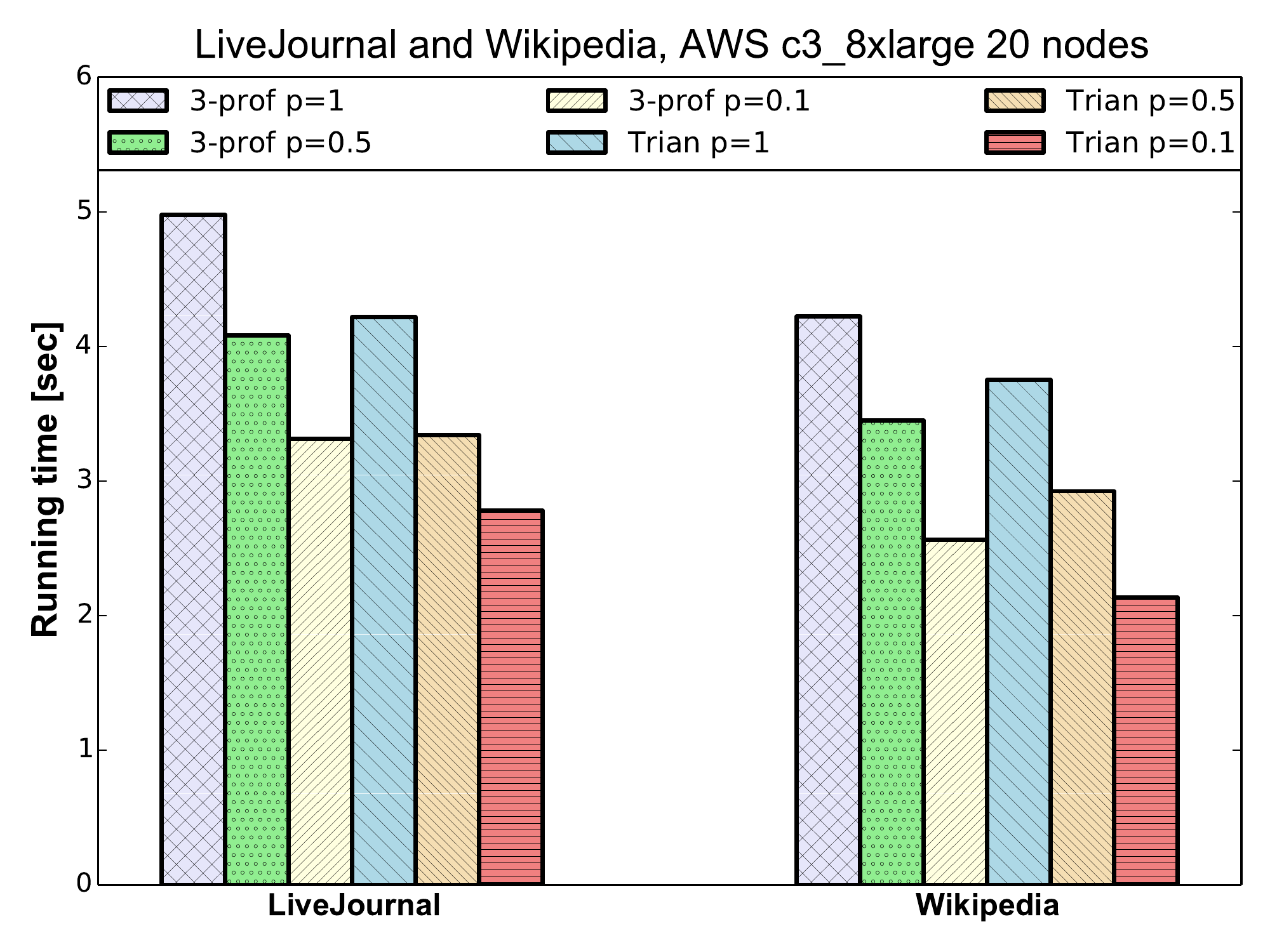}}
	\subfloat[]{\includegraphics[width=0.5\columnwidth]{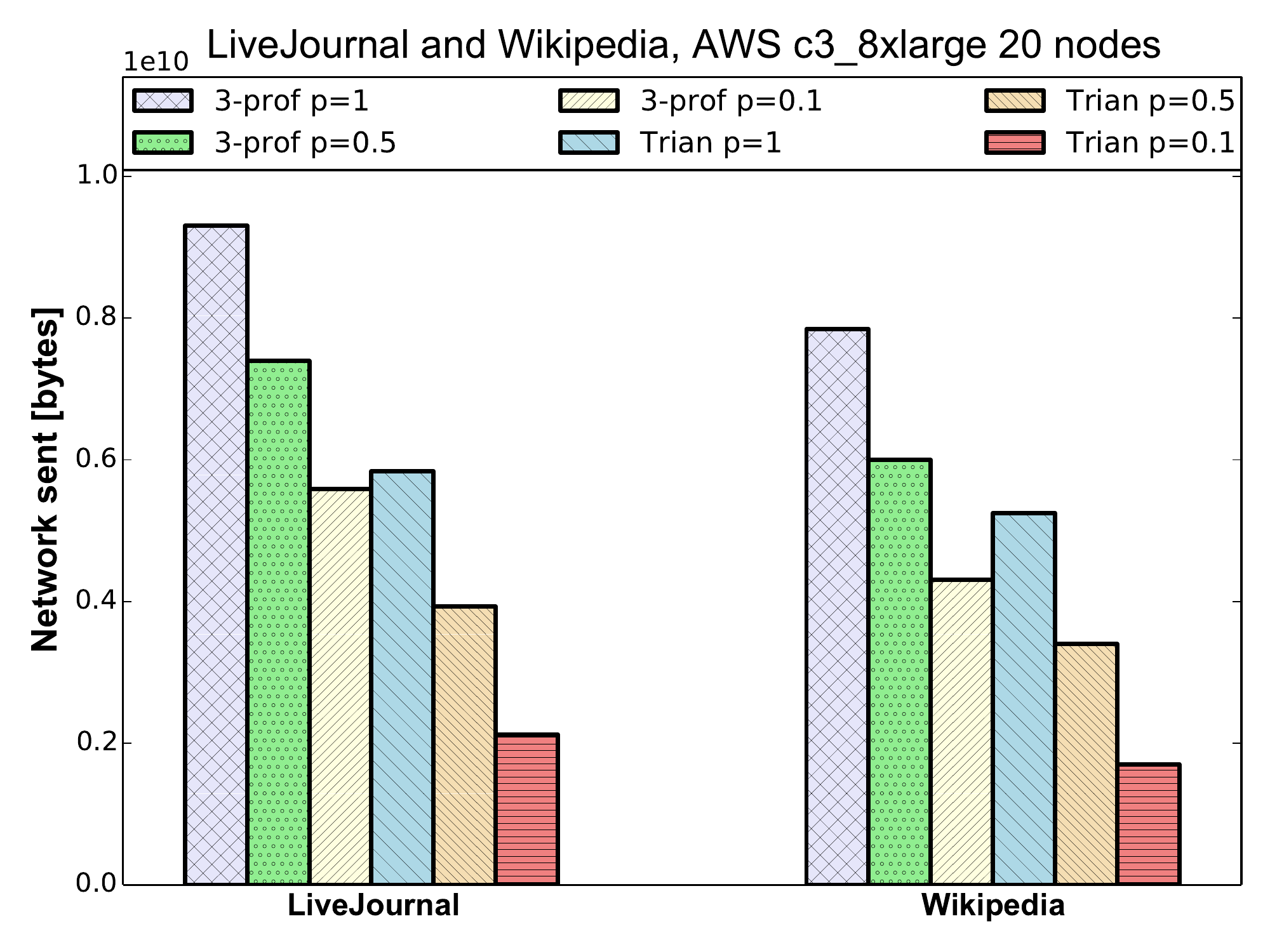}} \\
    \caption{\small AWS \texttt{c3\_8xlarge} cluster with 20 nodes. $\prof$ vs. $\trian$ results for LiveJournal and Wikipedia datasets (average of $3$ runs).
    (a) -- Running time for both graphs for various sampling probabilities $p$.
    (b) -- Network bytes sent by the algorithms for both graphs for various sampling probabilities $p$.
    }\label{fig:aws20node}
\end{figure}

\begin{figure}[h]
	\centering
	\subfloat[]{\includegraphics[width=0.5\columnwidth]{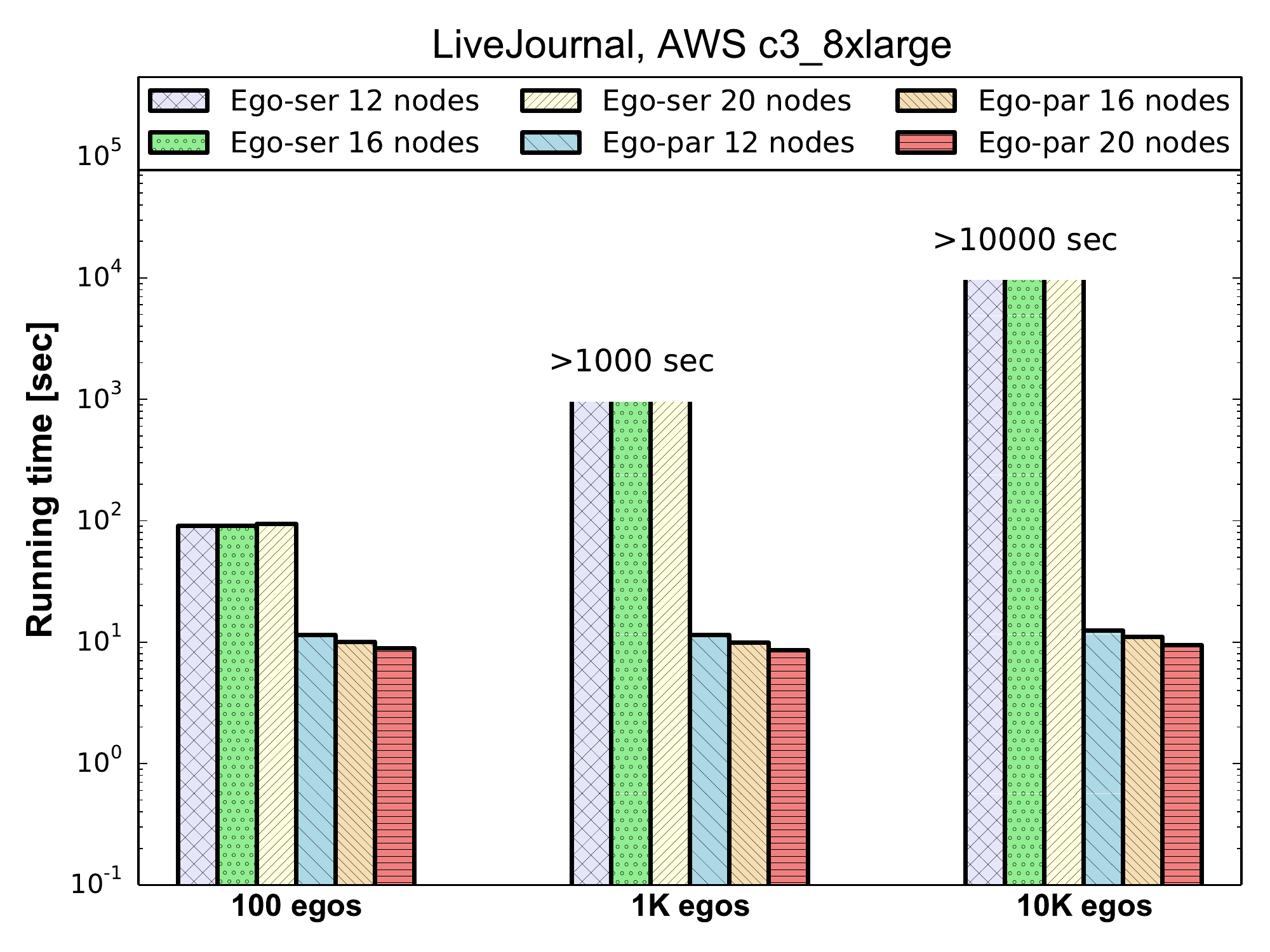}} 
	\subfloat[]{\includegraphics[width=0.5\columnwidth]{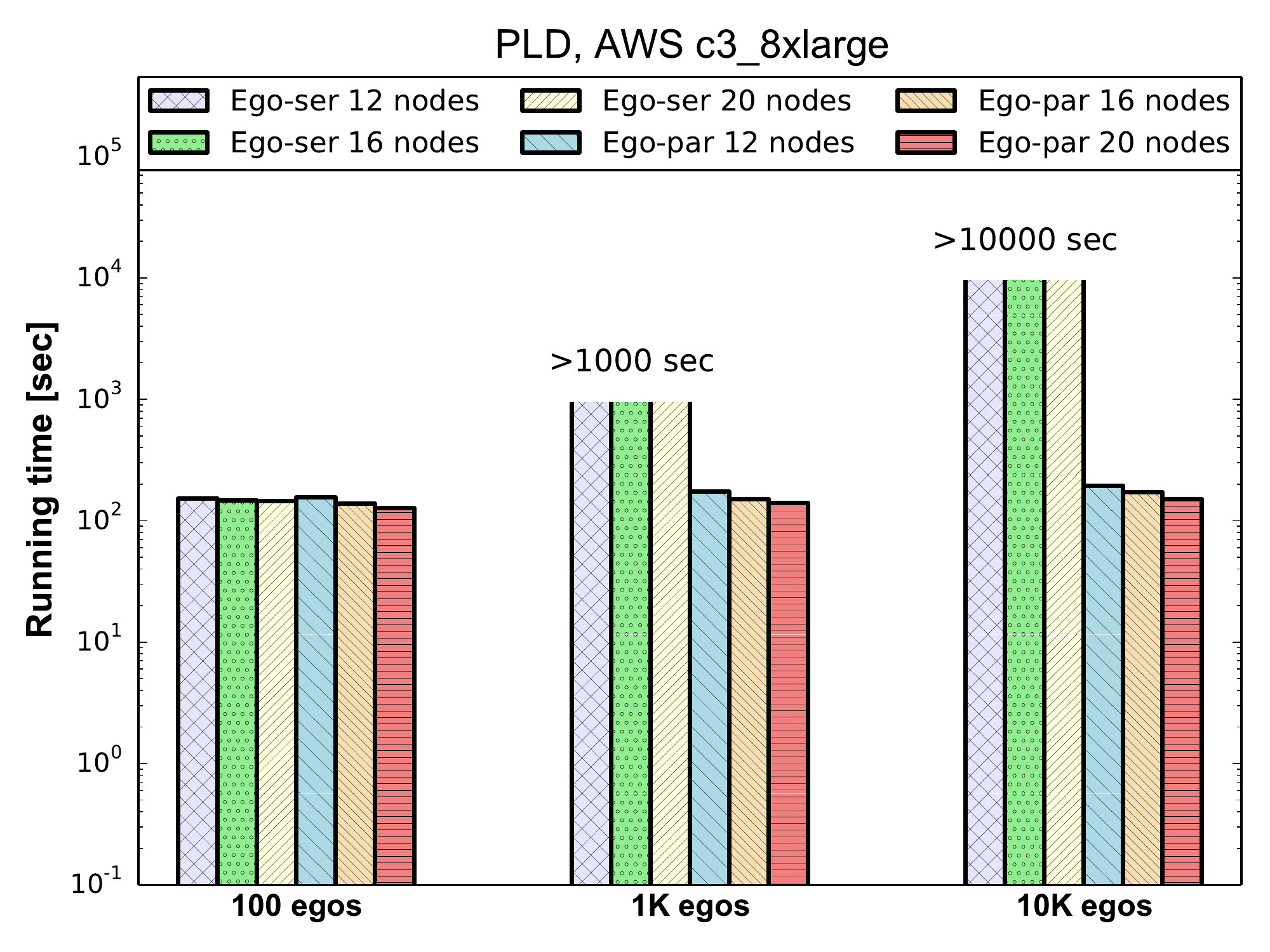}} \\ 
    \caption{\small AWS \texttt{c3\_8xlarge} cluster. $\egopar$ vs. $\egoser$  results for LiveJournal and PLD datasets (average of $5$ runs). Running time of $\egopar$ scales well with the number of ego centers, while $\egoser$ scales linearly.
    }
    \label{fig:awsc3ego}
\end{figure}

\noindent \textbf{Ego 3-profiles:}
The next set of experiments evaluates the performance of our $\egopar$ algorithm for counting ego $3$-profiles. We show the performance of $\egopar$ for various graphs and systems and also compare it to a naive serial algorithm $\egoser$. 
Let us start with the AWS system with (c3.8xlarge machines). In Figure \ref{fig:awsc3ego} we see the running time of $\egoser$ and $\egopar$ on the LiveJournal graph. The task was to find ego $3$-profiles of 100, 1K, and 10K randomly selected nodes. Since the running time depends on the size and structure of each induced subgraph, $\egoser$ and $\egopar$ operated on the same list of ego vertices. While for $100$ random vertices $\egoser$ performed well (and even achieved the same running time as $\egopar$ for the PLD graph), its performance drastically degraded for a larger number of vertices. This is due to its iterative nature -- it finds ego $3$-profiles of the vertices one at a time and is not scalable. 
Note that the open bars mean that this experiment was not finished. The numbers above them are extrapolations, which are reasonable due to the serial design of the $\egoser$. 

On the contrary, the $\egopar$ algorithm scales extremely well and computes ego $3$-profiles for 100, 1K, and 10K vertices almost in the same time. In Figure \ref{fig:awsc3egob} (a), we can see that as the number of nodes (i.e., machines) increases, running time of $\egopar$ decreases since its parallel design allows it to use additional computational resources. However, $\egoser$ cannot benefit from more resources and its running time even increases when more machines are used. The increase in running time of $\egoser$ is due to the increase in network usage when using more machines (see Figure \ref{fig:awsc3egob} (b)). The network usage of $\egopar$ also increases, but this algorithm compensates by leveraging additional computational power. 
In Figure \ref{fig:awsc3egoc}, we can see that $\egopar$ performs well even when finding ego $3$-profiles for all the LiveJournal vertices (4.8M vertices). 

Finally in Figure \ref{fig:asterixRuntimes} (b) and (c), we can see the comparison of $\egopar$ and $\egoser$ on the PLD and the DBLP graphs on the Asterix machine. 
For both graphs, we see a very good scaling of $\egopar$, while the running time of $\egoser$ scales linearly with the size of the ego vertices list.

\begin{figure*}[ht]
	\centering
	\subfloat[]{\includegraphics[width=0.67\columnwidth]{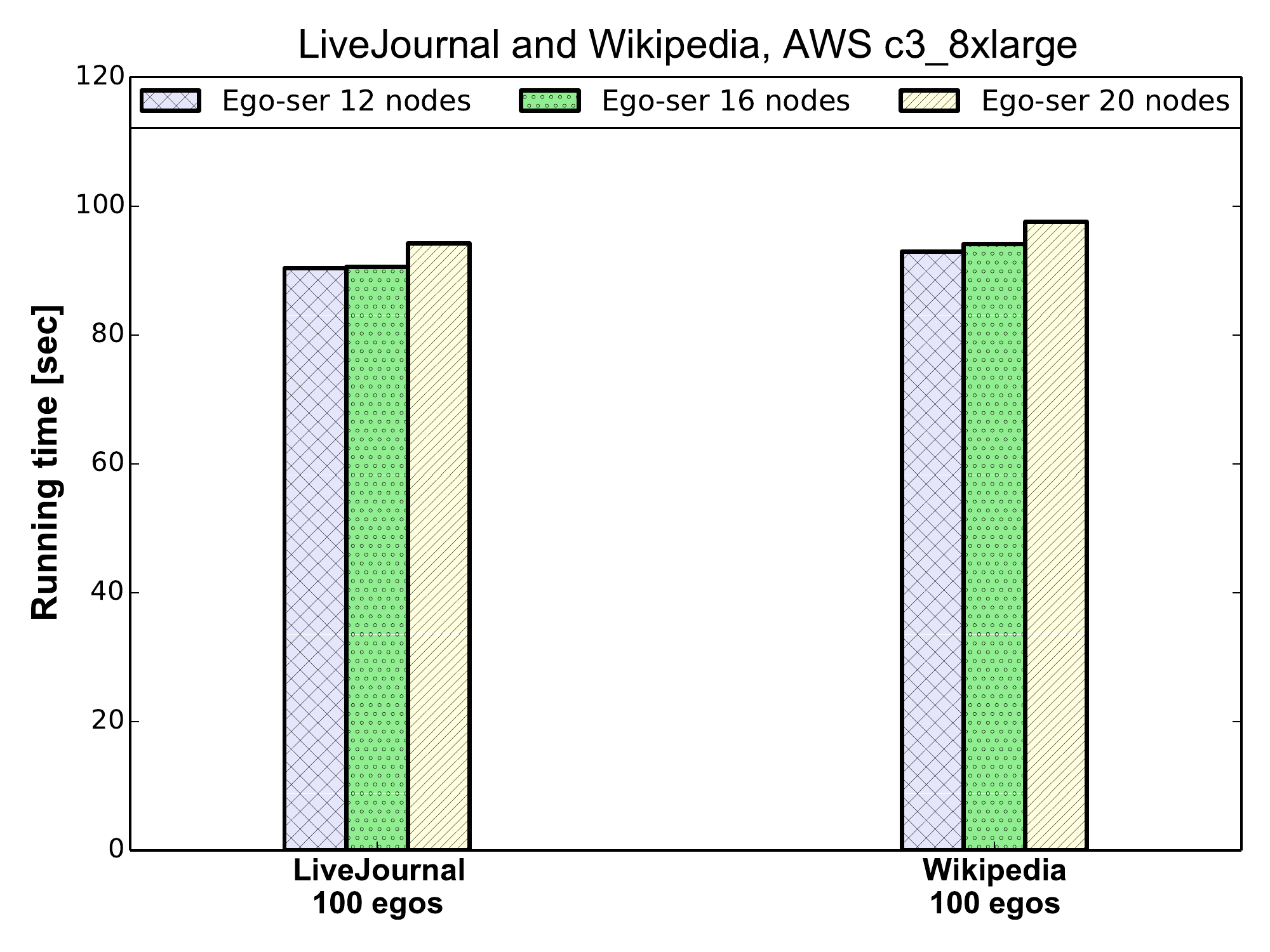}}  $\;$
		\subfloat[]{\includegraphics[width=0.67\columnwidth]{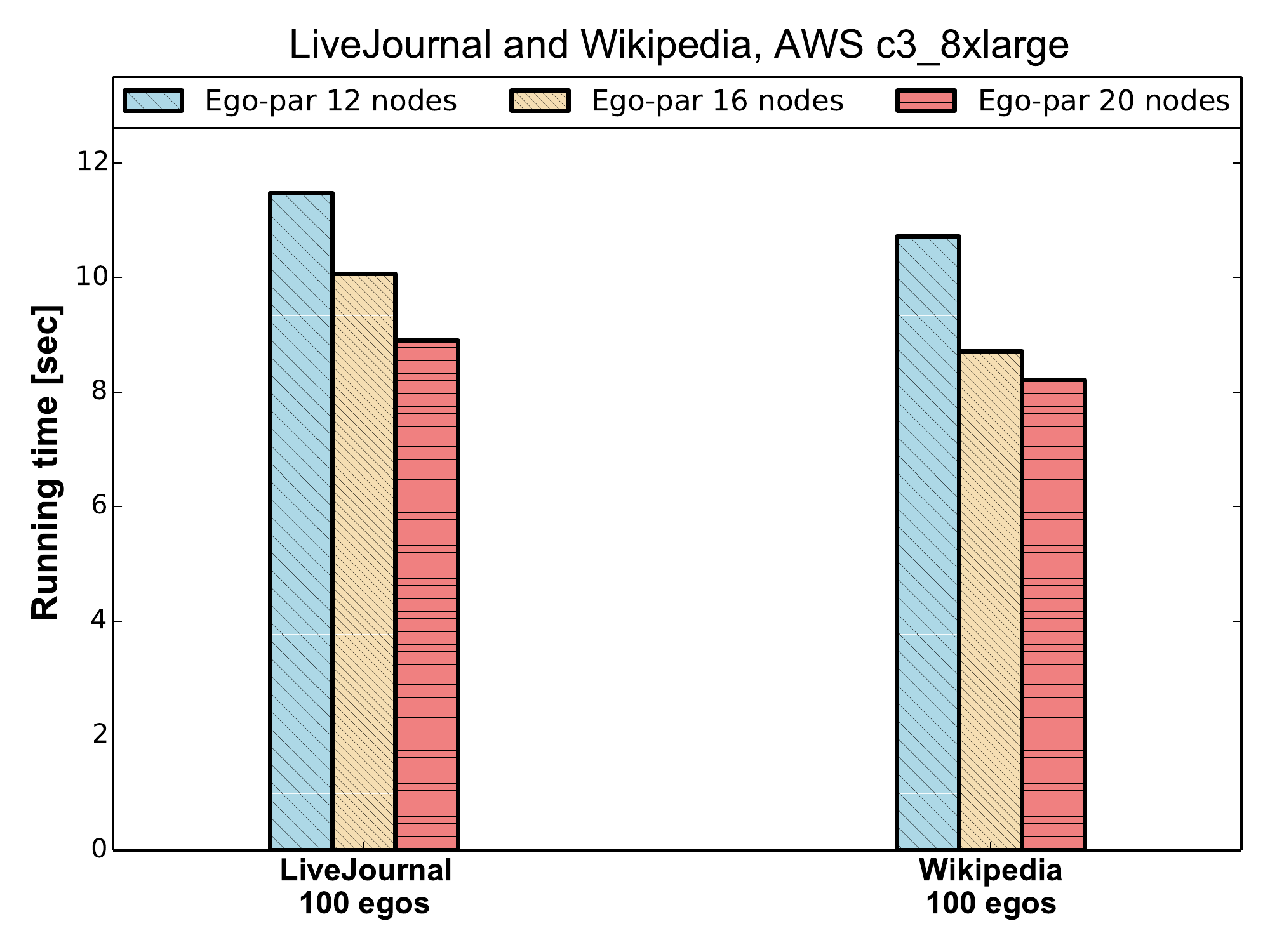}} $\;$
	\subfloat[]{\includegraphics[width=0.67\columnwidth]{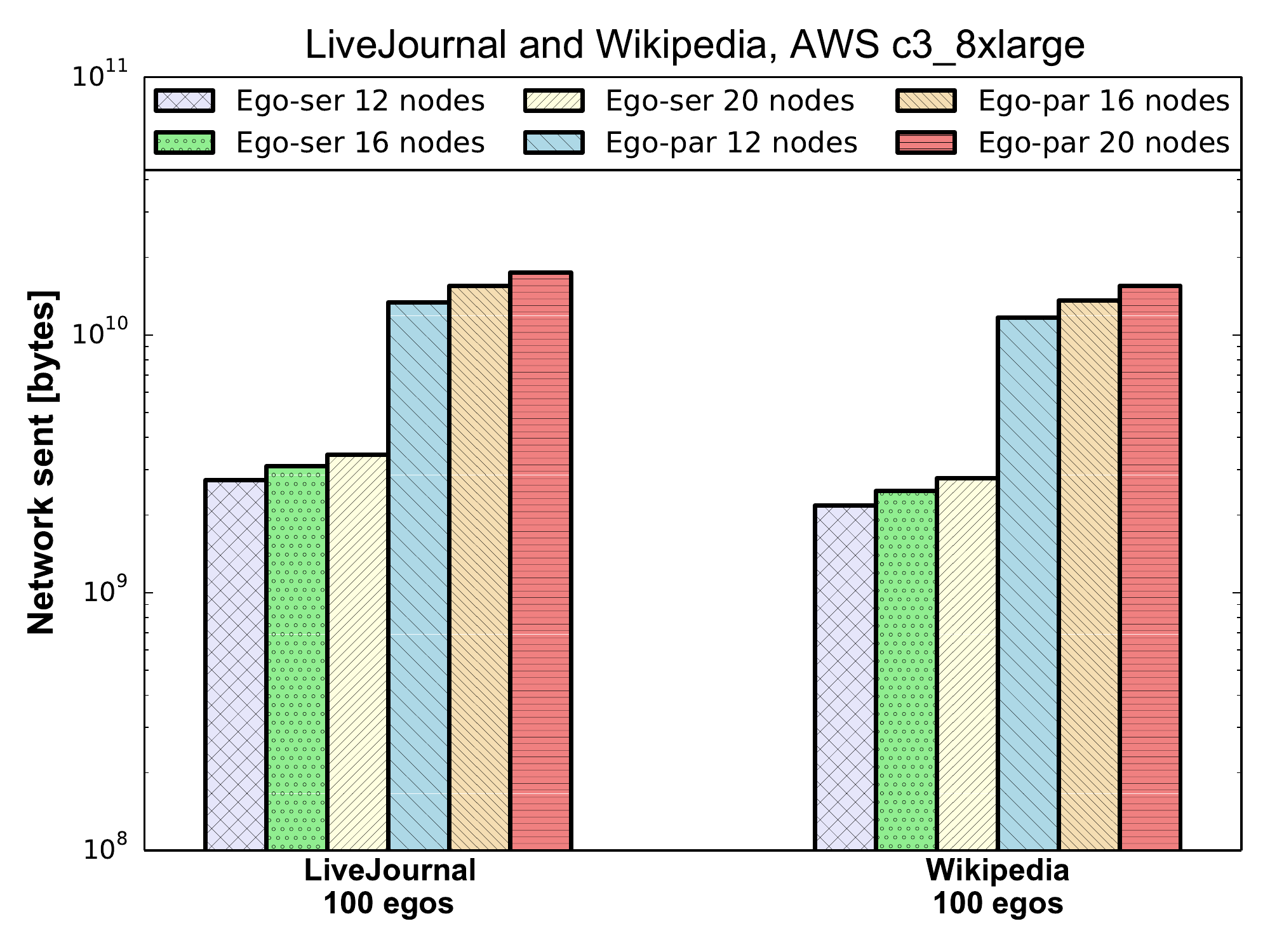}} 
    \caption{\small AWS \texttt{c3\_8xlarge} cluster. $\egopar$ vs. $\egoser$ results for LiveJournal and Wikipedia datasets (average of $5$ runs). 
Running time of $\egopar$ decreases with the number of machines due to its parallel design. Running time of $\egoser$ does not decrease with the number of machines due to its iterative nature. Network usage increases for both algorithms with the number of machines.
    }
    \label{fig:awsc3egob}
\end{figure*}

\begin{figure}[ht]
	\centering
	\subfloat[]{\includegraphics[width=0.5\columnwidth]{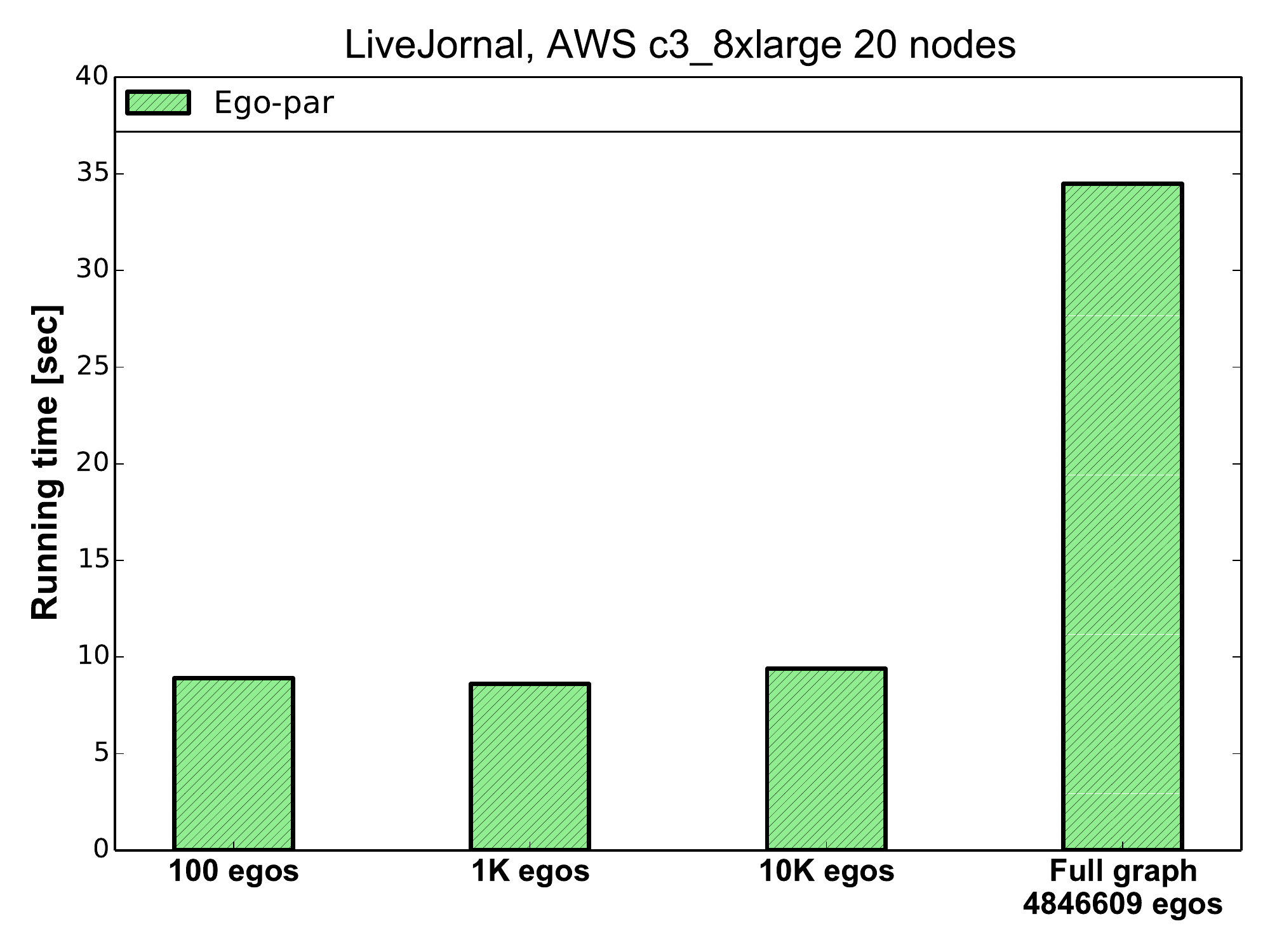}} 
	\subfloat[]{\includegraphics[width=0.5\columnwidth]{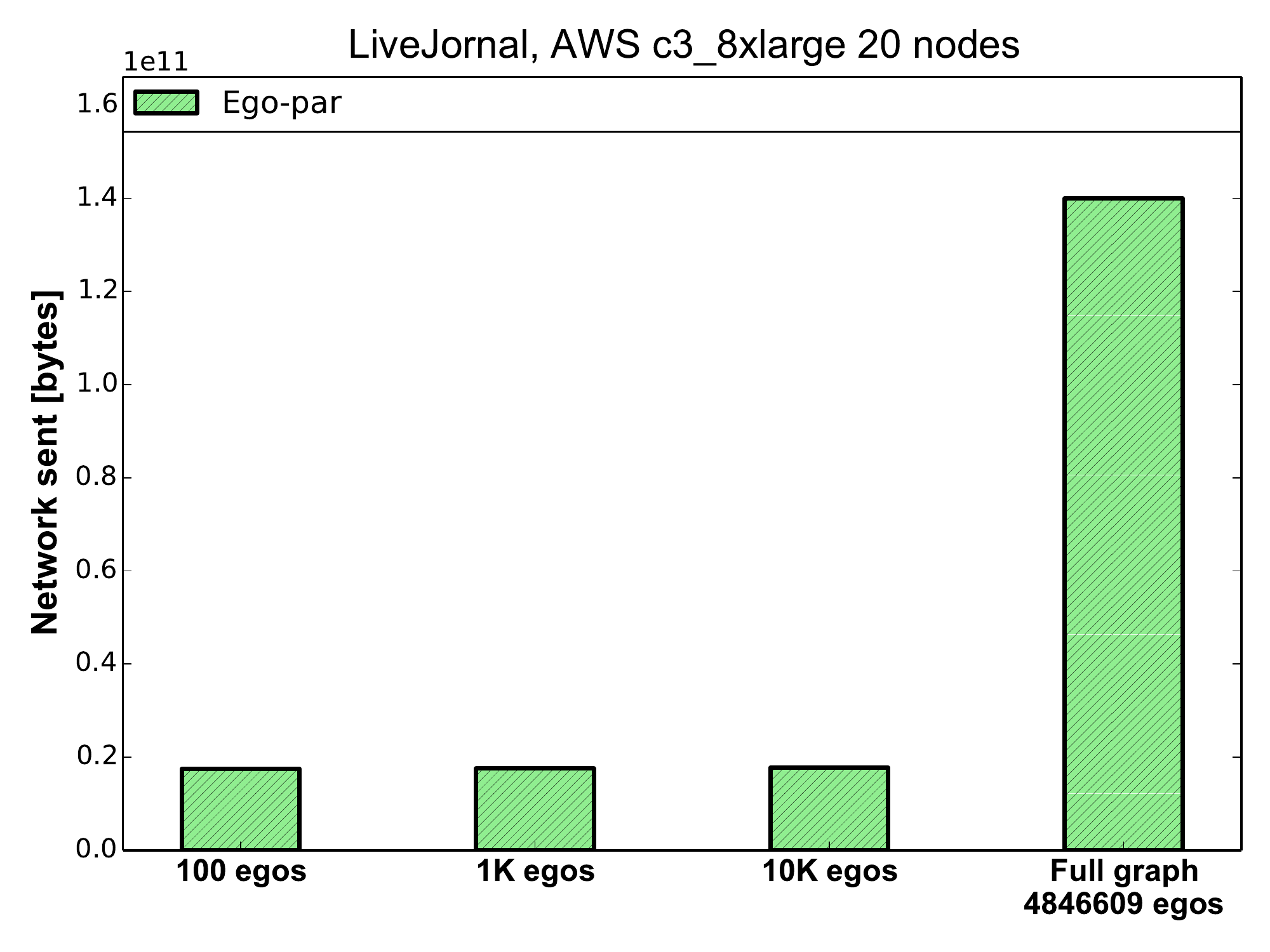}} \\
    \caption{\small AWS \texttt{c3\_8xlarge} cluster with 20 nodes. $\egopar$ results for LiveJournal dataset (average of $5$ runs). The algorithm scales well for various number of ego centers and even full ego centers list.
    (a) -- Running time. (b) -- Network bytes sent by the algorithm.
    }
    \label{fig:awsc3egoc}
\end{figure}

\begin{figure*}[ht]
	\centering
	\subfloat[]{\includegraphics[width=0.67\columnwidth]{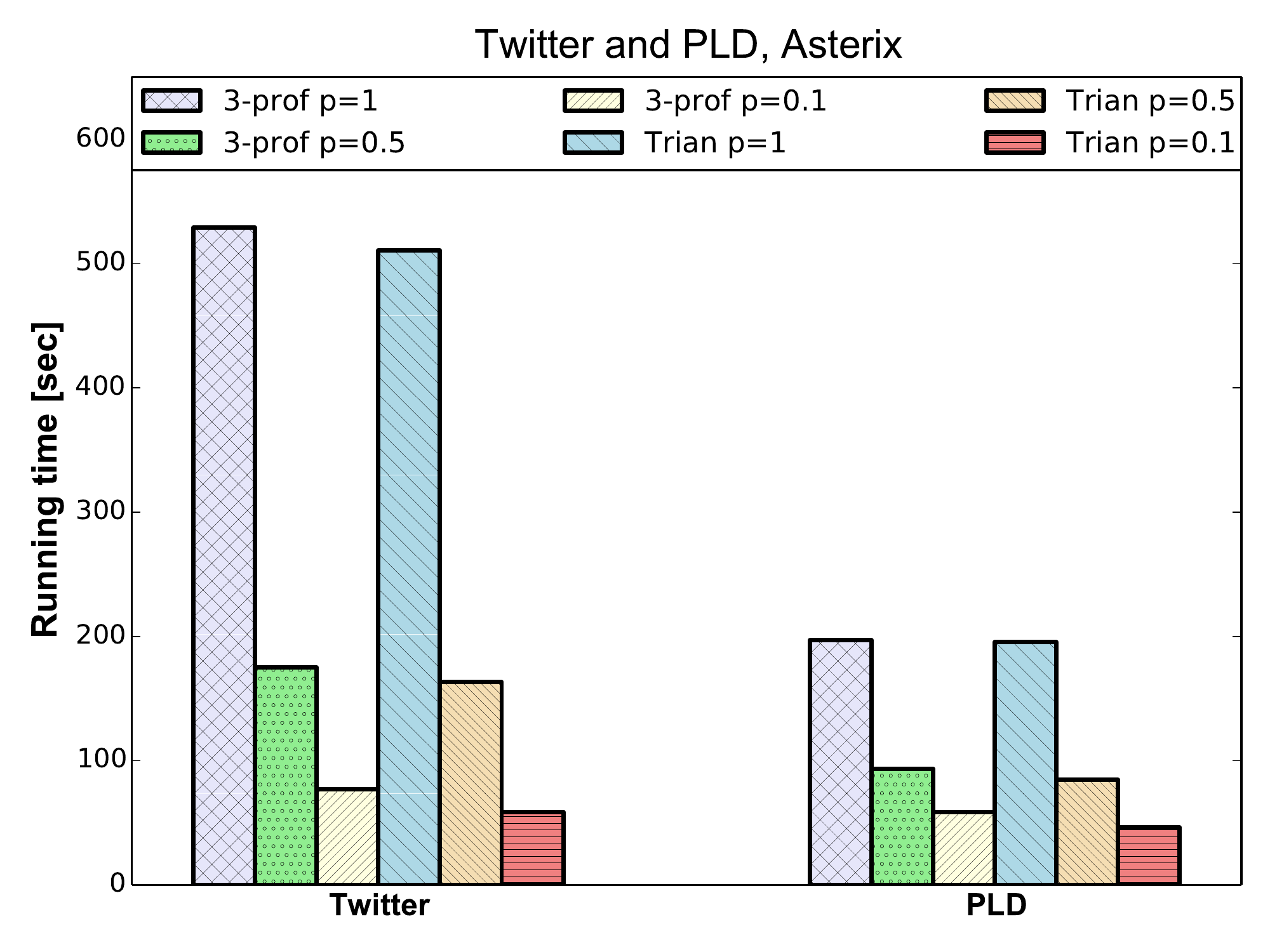}} $\;$
	\subfloat[]{\includegraphics[width=0.67\columnwidth]{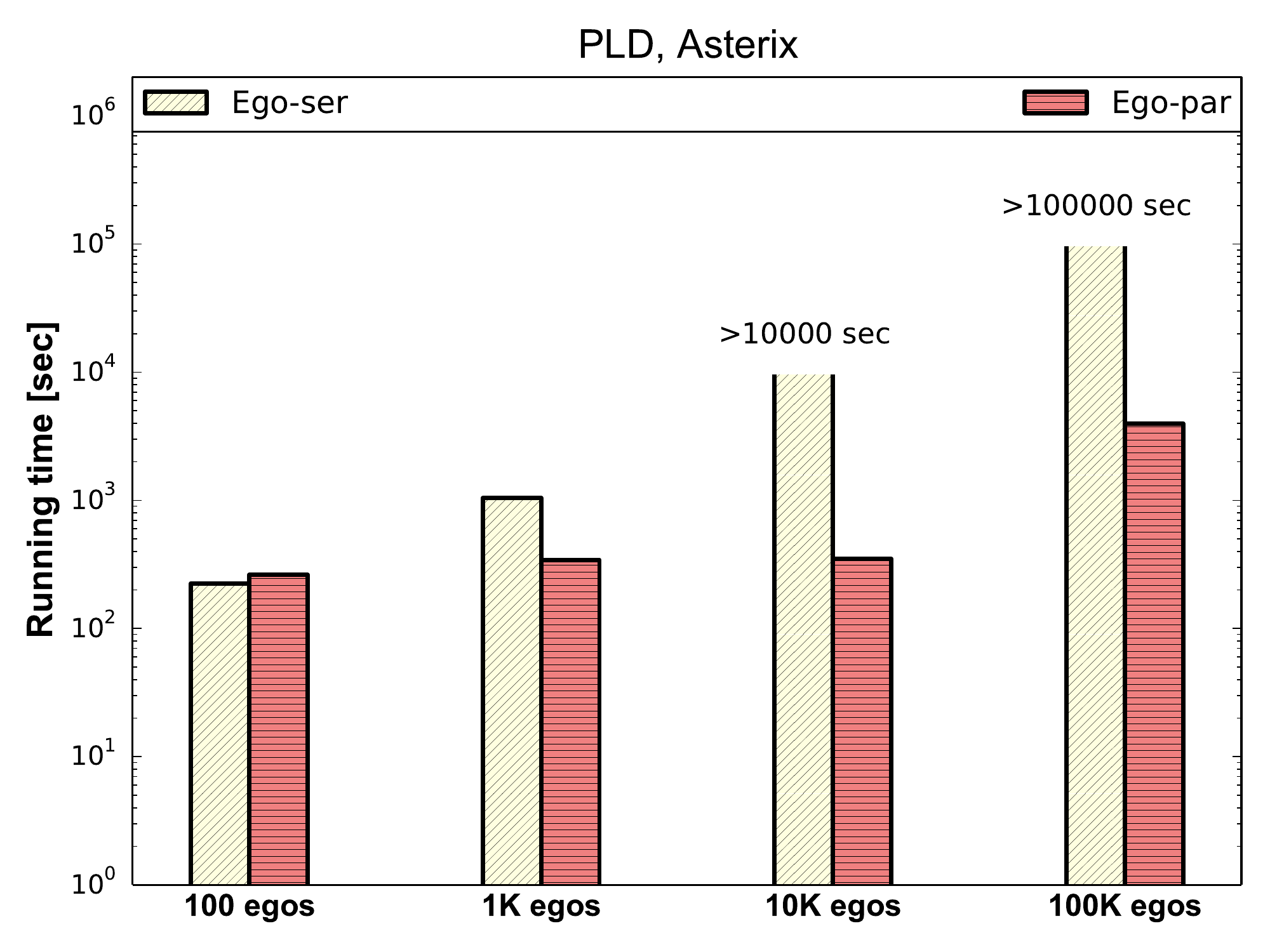}} $\;$
	\subfloat[]{\includegraphics[width=0.67\columnwidth]{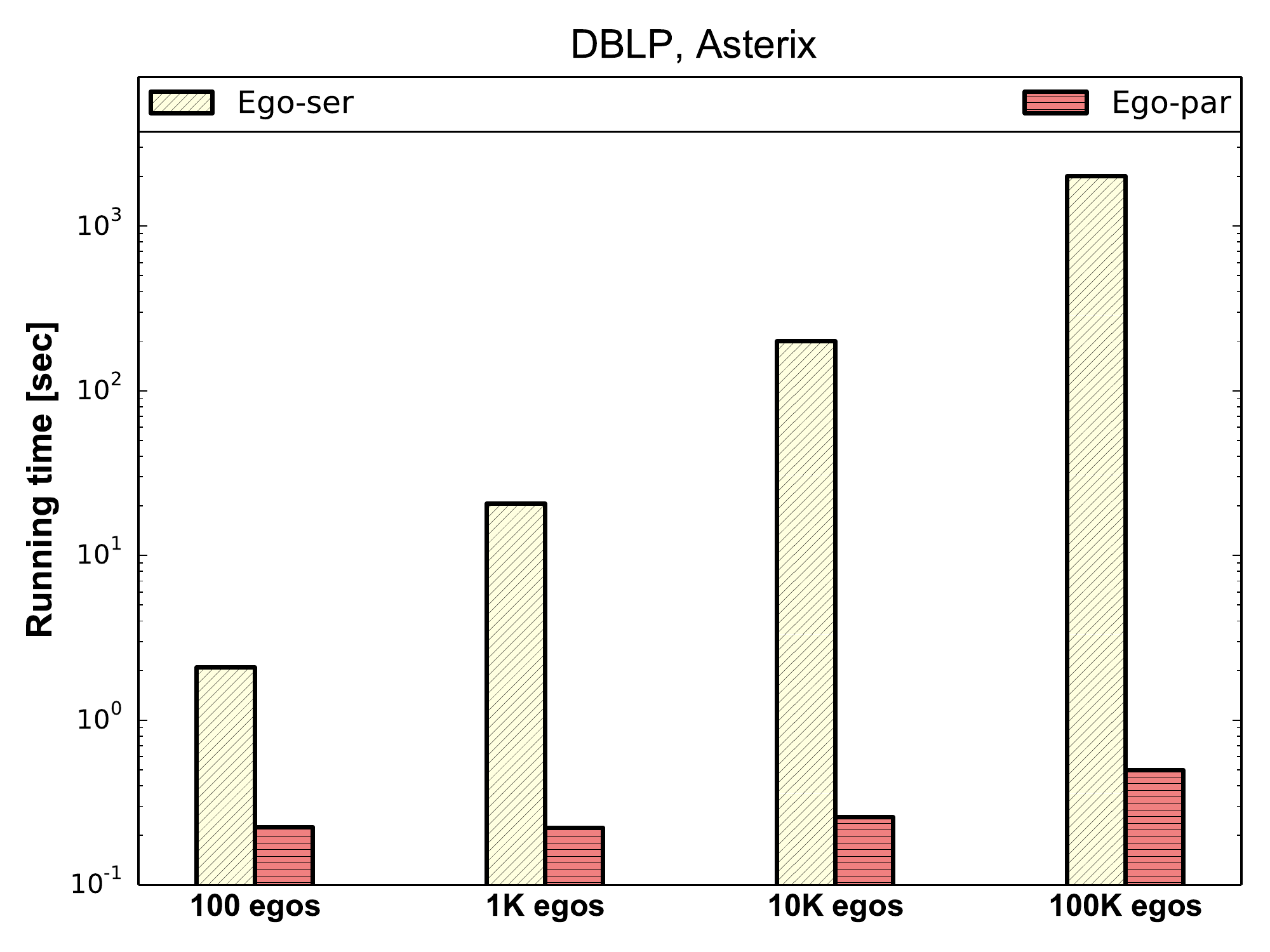}} 
    \caption{\small Asterix machine. Results for Twitter, PLD, and DBLP datasets. 
    (a) -- Running time of $\prof$ vs. $\trian$ for various sampling probabilities $p$.
    (b,c) -- Running time of $\egopar$ vs. $\egoser$ for various number of ego centers. Results are averaged over $3$, and $3$, and $10$ runs, respectively. 
    }\label{fig:asterixRuntimes}
\end{figure*}

\begin{figure*}[ht]
	\centering
	\subfloat[]{\includegraphics[width=0.67\columnwidth]{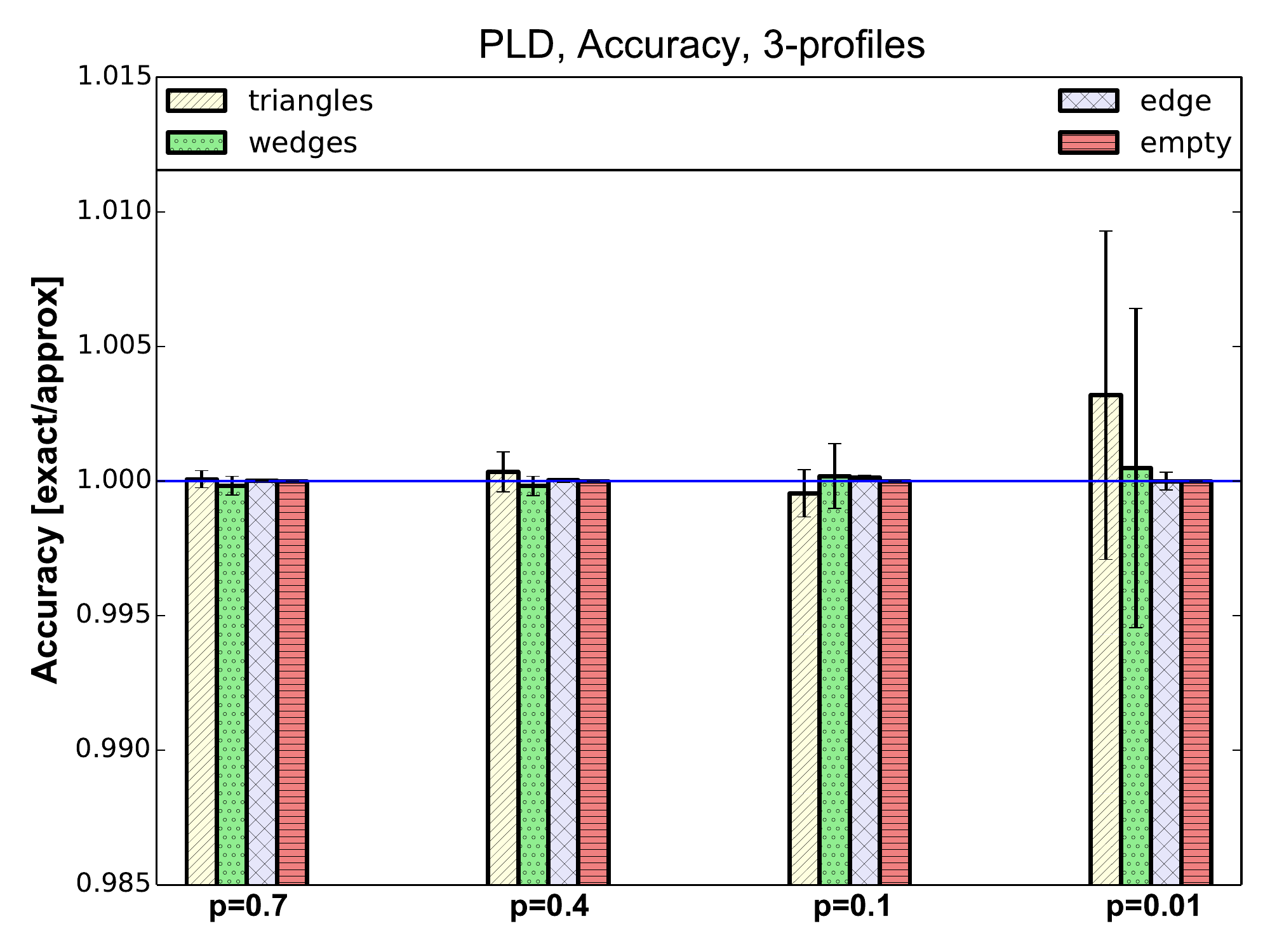}} $\;$
	\subfloat[]{\includegraphics[width=0.67\columnwidth]{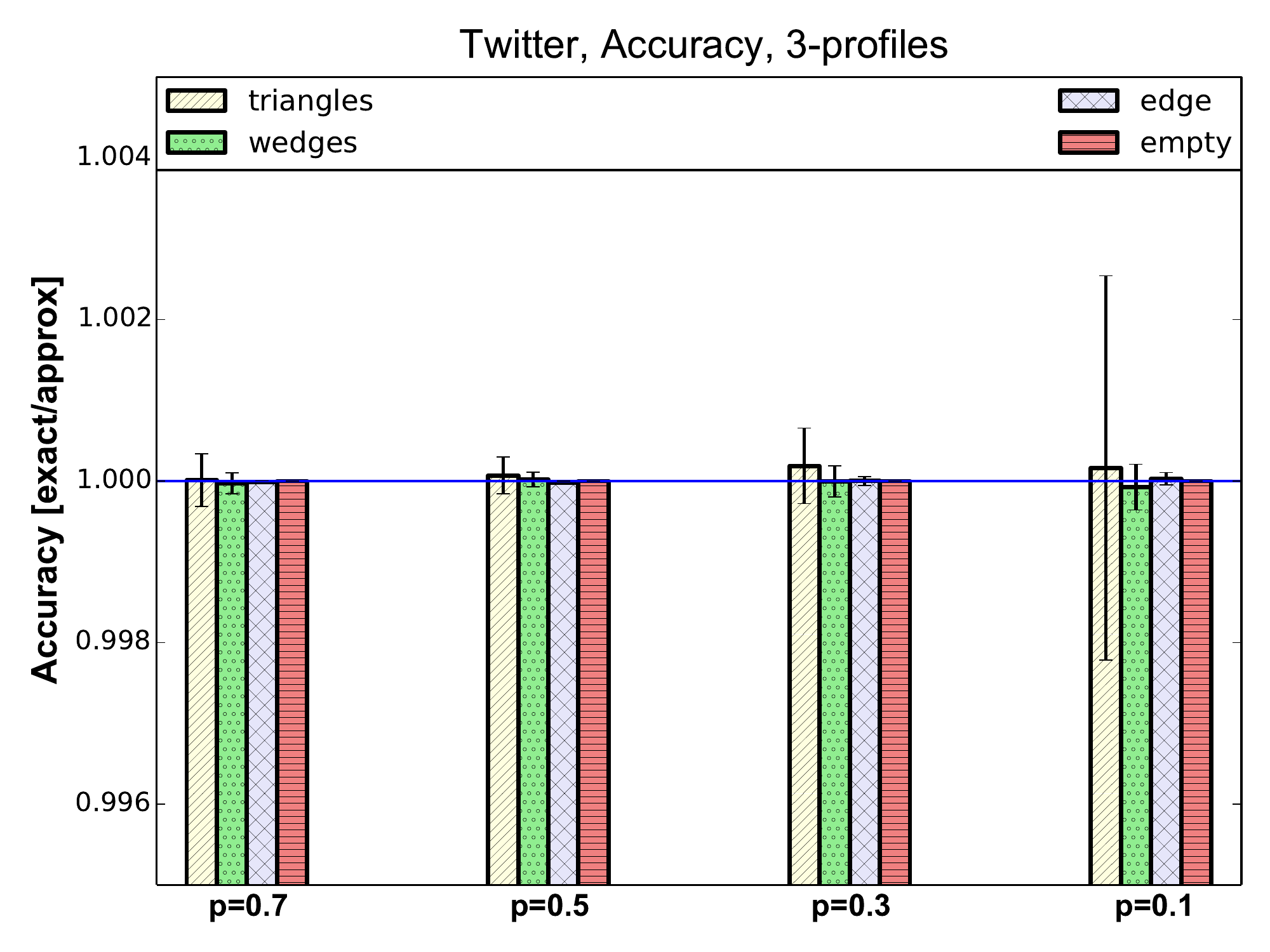}} $\;$
	\subfloat[]{\includegraphics[width=0.67\columnwidth]{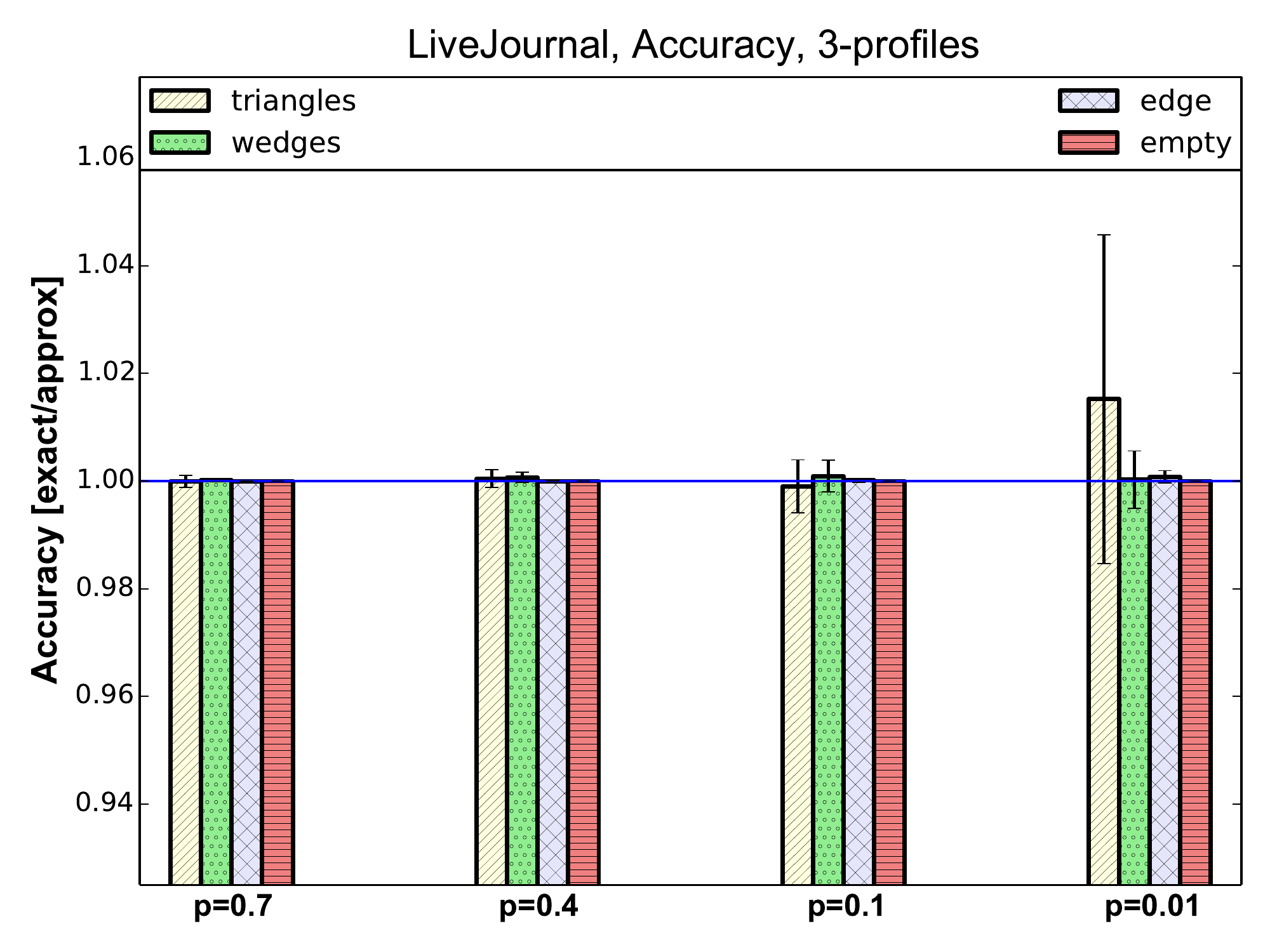}} 
    \caption{\small Global $3$-profiles accuracy achieved by $\prof$ algorithm for various graphs and for each profile count. Results are averaged over $5$, and $5$, and $10$ iterations, respectively. Error bars indicate $1$ standard deviation. The metric is a ratio between the exact profile count (when $p=1$) and the given output for $p<1$. All results are very close to the optimum value of $1$.
    }
    \label{fig:3profAcc}
\end{figure*}

\section{Conclusions}
In summary, we have reduced several $3$-profile problems to triangle and $4$-clique finding in a graph engine framework. Our concentration theorem and experimental results confirm that local $3$-profile estimation via sub-sampling is comparable in runtime and accuracy to local triangle counting.

This paper offers several directions for future work. First, both the local $3$-profile and the ego $3$-profile can be used as features to classify vertices in social or bioinformatic networks. Additionally, we hope to extend our theory and algorithmic framework to larger subgraphs, as well as special classes of input graphs. Our edge sampling Markov chain and unbiased estimators should easily extend to $k~>~3$. Equations in \eqref{eq:trisq} are useful to count local or global $4$-profiles in a centralized setting, as shown recently in \cite{Kowaluk2013eqs,Williams2014mod}. 
Tractable distributed algorithms for $k~>~3$ using similar edge pivot equations remain as future work.
Our observed dependence on $4$-clique count suggests that an improved graph engine-based clique counting subroutine will improve the parallel algorithm's performance.

\newpage
\bibliographystyle{abbrv}
\bibliography{refs}

\newpage
\appendix
\section{ Proof of Theorem \ref{PROFILECONCENTRATION}}
Let $m$ be the total number of edges in the original graph $G$. If $e$ is an edge in the original graph $G$, let $t_e$ be the random indicator after sampling. $t_e=1$ if $e$ is sampled and $0$ otherwise. Let ${\cal H}_0, {\cal H}_1, {\cal H}_2,{\cal H}_3$ denote the set of distinct subgraphs of the kind $H_0,H_1,H_2$ and $H_3$ (anti-clique, edge, wedge and triangle) respectively. Let $A,\_(e),\Lambda(e,f)$ and $\Delta(e,f,g)$ denote an anti-clique with no edges, a $H_1$ with edge $e$ , a $H_2$ with two edges $e,f$ and a triangle with edges $e,f,g$ respectively in the original graph $G$. Our estimators \eqref{eq:estimatorStart}-\eqref{eq:estimatorEnd} are a function of $Y_i$'s and each $Y_i$ can be written as a polynomial of at most degree $3$ in all the variables $t_e$.
\begin{align}
Y_0 &= n_0+ \sum_{\_(e) \in {\cal H}_1} (1-t_{e})+ \sum_{\Lambda(e,f) \in {\cal H}_2} (1-t_e)(1-t_f) + \nonumber \\ \
  \hfill & \sum_{\Delta(e,f,g) \in {\cal H}_3} (1-t_e)(1-t_f)(1-t_g) \label{eq:Y0} \\ 
  Y_1 &= \sum_{\_(e) \in {\cal H}_1} t_{e}+ \sum_{\Lambda(e,f) \in {\cal H}_2} ((1-t_e)t_f +(1-t_f)t_e ) + \nonumber \\ 
 \hfill & \sum_{\Delta(e,f,g) \in {\cal H}_3} t_e(1-t_f)(1-t_g)+ \nonumber \\
 \hfill & \sum_{\Delta(e,f,g) \in {\cal H}_3} t_f(1-t_e)(1-t_g)+t_g(1-t_e)(1-t_f)   \label{eq:Y1}\\ 
 Y_2 &= \sum_{\Lambda(e,f) \in {\cal H}_2} t_e t_f + \nonumber \\
 \hfill & \sum_{\Delta(e,f,g) \in {\cal H}_3} (t_et_f)(1-t_g)+t_f t_g(1-t_e)+t_e(1-t_f)t_g \label{eq:Y2}  \\ 
 Y_3 &= \sum_{\Delta(e,f,g) \in {\cal H}_3} t_et_f t_g \label{eq:Y3} \\
 S_1 &=  \sum_{\_(e) \in {\cal H}_1} t_{e} \label{eq:S1} \\
 D_1 & = \sum_{\Lambda(e,f) \in {\cal H}_2} (t_e + t_f) \label{eq:D1} \\
 D_2 &= \sum_{\Lambda(e,f) \in {\cal H}_2} t_et_f \label{eq:D2} \\
 T_1 &= \sum_{\Delta(e,f,g) \in {\cal H}_3} (t_e+t_f+t_g) \label{eq:T_1} \\
 T_2 &= \sum_{\Delta(e,f,g) \in {\cal H}_3} (t_et_f+t_ft_g+t_gt_e) \label{eqn:T2} \\
 Y_1 &= S_1+ D_1-2D_2 + T_1 - 2T_2 + 3Y_3 \label{eq:modY_1}\\
 Y_2 & = D_2+T_2 -3Y_3 \label{eqn:modY_2}
 \end{align}
Note that the newly defined polynomials have the following expectations:
\begin{align*}
\mathbb{E}[S_1] &= pn_1 \\
\mathbb{E}[D_1] &= 2pn_2 \\
\mathbb{E}[D_2] &= p^2n_2 \\
\mathbb{E}[T_1] &= 3pn_3 \\
\mathbb{E}[T_2] &= 3p^2n_3 .
\end{align*}

 We observe that in the above even by change of variables $y_e=(1-t_e)$, $Y_1$ and $Y_2$ are not totally positive polynomials. This means that Proposition \ref{KIMVUCONCENTRATION} cannot be applied directly to the $Y_i's$ or $X_i$'s. The strategy we adopt is to split the $Y_1$ and $Y_2$ into many polynomials, each of which is totally positive, and then apply Proposition \ref{KIMVUCONCENTRATION} on each of them.  
  $P=\{Y_0,Y_3,S_1,D_1,D_2,T_1,T_2\}$ form the set of totally positive polynomials (proved below). Substituting the above equations into \eqref{eq:estimatorStart}-\eqref{eq:estimatorEnd}, we have the following system of equations that connect $X_i$'s and the set of totally positive polynomials $P$:
 \begin{align}
  X_0 &= Y_0 - \frac{1-p}{p}\left(S_1+ D_1+ T_1  -2D_2 -2T_2+3Y_3 \right) \nonumber \\ 
  & \qquad + \frac{(1-p)^2}{p^2}\left( D_2+T_2 -3Y_3\right) - \frac{(1-p)^3}{p^3}Y_3 \nonumber \\
  &= Y_0 - \frac{1-p}{p}(S_1 + D_1 + T_1) \nonumber \\
  & \qquad  + \frac{1-p^2}{p^2}(D_2 + T_2) - \frac{1 - p^3}{p^3}Y_3 . \label{eqn:X0} \\
  X_1 &= \frac{1}{p}\left(S_1+ D_1+ T_1  -2D_2 -2T_2+3Y_3 \right) \nonumber \\
  & \qquad - \frac{2(1-p)}{p^2}\left( D_2+T_2 -3Y_3\right) + \frac{3(1-p)^2}{p^3}Y_3 \nonumber \\
  &= \frac{1}{p} (S_1 + D_1 + T_1) - \frac{2}{p^2} (D_2 + T_2) + \frac{3}{p^3}Y_3. \label{eqn:X1}\\
  X_2 &= \frac{1}{p^2}\left( D_2+T_2 -3Y_3\right) - \frac{3(1-p)}{p^3}Y_3 \nonumber \\
  & = \frac{1}{p^2}(D_2 + T_2) - \frac{3}{p^3}Y_3. \label{eqn:X2}\\
  X_3 &= \frac{1}{p^3}Y_3 . \label{eqn:X3}
 \end{align} 

Let $\alpha_{e}$, $\beta_{e}$, and $\Delta_{e}$ be the maximum number of $H_1$'s, $H_2$'s, and $H_3$'s containing an edge $e$ in the original graph G. Let $\alpha, \beta$ and $\Delta$ be the maximum of $\alpha_e,\beta_e$, and $\Delta_e$ over all edges $e$. 
We now show concentration results for the totally positive polynomials alone.
\begin{lem}\label{lem:Y0}
 Define variables $y_e=1-t_e$. Then $Y_0$ is totally positive in $y_e$. With respect to the variables $y_e$, $\mathbb{E}_{\geq 1} \left[ Y_0 \right] \leq 3 \max \{\alpha, \beta,\Delta \}$.
\end{lem}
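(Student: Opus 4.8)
The plan is to establish total positivity by the change of variables and then to evaluate $\mathbb{E}_{\geq 1}[Y_0]$ through a case analysis on the order of the mixed partial derivative, the maximum being attained at first order. First I would substitute $y_e = 1-t_e$ into \eqref{eq:Y0}, which gives
\begin{align}
Y_0 = n_0 + \sum_{\_(e) \in \mathcal{H}_1} y_e + \sum_{\Lambda(e,f) \in \mathcal{H}_2} y_e y_f + \sum_{\Delta(e,f,g) \in \mathcal{H}_3} y_e y_f y_g . \nonumber
\end{align}
Since $n_0 \geq 0$ and every monomial coefficient equals $+1$, $Y_0$ is totally positive in the $y_e$. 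Because $G$ is simple, each subgraph uses distinct edges, so $Y_0$ is multilinear; hence any derivative $\partial^\vecalpha$ with a repeated index vanishes, and only squarefree $\vecalpha$ supported on $1$, $2$, or $3$ distinct edges can achieve the maximum in the definition of $\mathbb{E}_{\geq 1}[Y_0]$.

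Next I would treat the three surviving cases. Differentiating with respect to a single edge variable $y_{e_0}$ term by term gives $\partial Y_0 / \partial y_{e_0} = \alpha_{e_0} + \sum_{\Lambda \ni e_0} y_f + \sum_{\Delta \ni e_0} y_f y_g$, where $\alpha_{e_0}$ counts the $H_1$'s on $e_0$, the middle sum runs over the $\beta_{e_0}$ wedges through $e_0$, and the last over the $\Delta_{e_0}$ triangles through $e_0$. Taking expectations, using independence of the $t_e$ together with $\mathbb{E}[y_f] = 1-p$ and $\mathbb{E}[y_f y_g] = (1-p)^2$, yields
\begin{align}
\mathbb{E}\!\left[ \frac{\partial Y_0}{\partial y_{e_0}} \right] = \alpha_{e_0} + \beta_{e_0}(1-p) + \Delta_{e_0}(1-p)^2 \leq \alpha + \beta + \Delta \leq 3\max\{\alpha,\beta,\Delta\}. \nonumber
\end{align}

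For the higher-order derivatives I would show they are dominated by this bound. A fixed pair of distinct edges lies in at most one wedge and in at most one triangle, so the second mixed derivative has expectation at most $1 + (1-p) \leq 2$; a fixed triple of distinct edges spans at most one triangle, so the third derivative has expectation at most $1$. Both are at most $3\max\{\alpha,\beta,\Delta\}$ (which is at least $1$), so the maximum defining $\mathbb{E}_{\geq 1}[Y_0]$ is realized by a first-order derivative, giving the claimed inequality.

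The individual computations are routine; the one point needing care is the combinatorial bookkeeping in the derivative step --- matching the linear term to $\alpha$, the wedge term to $\beta$, and the triangle term to $\Delta$, and observing that the second- and third-order derivatives collapse to constants because a fixed set of two or three edges is contained in at most one wedge or triangle. This structural fact, rather than any delicate estimate, is what makes the first-order contribution dominant and pins the bound at $3\max\{\alpha,\beta,\Delta\}$.
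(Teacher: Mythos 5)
Your proposal is correct and follows essentially the same route as the paper's proof: bound the expected partial derivatives order by order, obtaining $\alpha_e + (1-p)\beta_e + (1-p)^2\Delta_e \leq 3\max\{\alpha,\beta,\Delta\}$ at first order, at most $1+(1-p)\leq 2$ at second order, and at most $1$ at third order. The extra details you supply (the explicit substitution $y_e = 1-t_e$ showing total positivity, and the multilinearity observation ruling out repeated indices) are correct refinements of what the paper leaves implicit.
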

\begin{proof}
  We have the expectation of the following partial derivatives, up to the third order:
  \begin{align*}
    \mathbb{E} \left[ \frac{\partial Y_0}{\partial y_e} \right] &=  \alpha_e + (1-p) \beta_e + (1-p)^2 \Delta_e \nonumber \\
     \hfill &\leq 3 \max \{\alpha_e, \beta_e,\Delta_e \}. \nonumber \\
     \mathbb{E} \left[\frac {\partial Y_0}{\partial y_e y_f} \right] & \leq 1+(1-p) \leq 2 , ~
     \mathbb{E} \left[\frac{\partial Y_0}{\partial y_e y_f y_g} \right]  \leq 1. \nonumber 
  \end{align*}
  From the above equations, we have $\mathbb{E}_{\geq 1} \left[ Y_1 \right] \leq 3 \max \{\alpha, \beta, \Delta \}$ for a nonempty graph. 
  \end{proof}
 To satisfy $\mathbb{E}_{\geq1}\left[Y_0 \right] \leq \mathbb{E}[Y_0]$, it is sufficient to have
\begin{equation}\label{eqn:condY0}
n_0 \geq 3 \max \{\alpha, \beta, \Delta \} .
\end{equation}
This is because $Y_0 \geq n_0$ with probability $1$.
  \begin{lem}\label{lem:Y3}
 $Y_3$ is totally positive in $t_e$. With respect to the variables $t_e$, $\mathbb{E}_{\geq 1} \left[ Y_3 \right] \leq  \max \{1,p^2 \Delta \}$.
\end{lem}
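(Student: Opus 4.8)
The plan is to follow the same template as the proof of Lemma \ref{lem:Y0}: first establish total positivity, then bound the expected partial derivatives order by order and take the maximum. Total positivity in the $t_e$ is immediate, since $Y_3 = \sum_{\Delta(e,f,g) \in {\cal H}_3} t_e t_f t_g$ is a sum of monomials each with coefficient $1$. Moreover this polynomial is multilinear (every edge appears to the first power, as the three edges of a triangle are distinct), and the $t_e$ are Boolean. Hence any $\vecalpha$ with some $\alpha_i \geq 2$ annihilates $Y_3$, so it suffices to examine the multilinear derivatives $\vecalpha \in \{0,1\}^m$ of orders $1$, $2$, and $3$.

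For the first order, differentiating with respect to a single edge $t_e$ leaves one term $t_f t_g$ for each triangle through $e$, so $\partial Y_3 / \partial t_e = \sum_{\Delta(e,f,g)} t_f t_g$, a sum of $\Delta_e$ products of two distinct indicators. Each such product has expectation $p^2$ by independence of the sampling, giving $\mathbb{E}\left[\partial Y_3/\partial t_e\right] = p^2 \Delta_e \leq p^2 \Delta$.

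The key structural observation for the higher orders is that any fixed pair of edges lies in at most one triangle, and likewise any fixed triple of edges forms at most one triangle. Consequently a second-order derivative $\partial^2 Y_3/(\partial t_e \partial t_f)$ equals a single surviving indicator $t_g$ (or $0$), whose expectation is at most $p$; and a third-order derivative equals $1$ (or $0$), with expectation at most $1$.

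Combining the three orders yields $\mathbb{E}_{\geq 1}\left[Y_3\right] \leq \max\{p^2\Delta,\, p,\, 1\}$. Since $0 < p \leq 1$, the order-two bound $p$ is dominated by $1$, leaving $\mathbb{E}_{\geq 1}\left[Y_3\right] \leq \max\{1, p^2\Delta\}$, as claimed. This is essentially a direct computation; the only point requiring care is the at-most-one-triangle observation that keeps the second- and third-order derivatives bounded by constants, together with the use of $p \leq 1$ to absorb the middle term into the constant~$1$.
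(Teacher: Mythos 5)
Your proposal is correct and follows essentially the same route as the paper's proof: bound the expected partial derivatives of orders one through three ($p^2\Delta_e$, at most $p$, at most $1$ respectively) and take the maximum, absorbing the middle term using $p \leq 1$. The only difference is that you spell out the justifications the paper leaves implicit (multilinearity killing derivatives with any $\alpha_i \geq 2$, and the fact that two or three fixed edges lie in at most one triangle), which is a welcome but not substantively different addition.
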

\begin{proof}
  We have the expectation of the following partial derivatives, up to the third order:
  \begin{align*}
    \mathbb{E} \left[ \frac{\partial Y_3}{\partial t_e} \right] = p^2 \Delta_e ,~
     \mathbb{E} \left[\frac {\partial Y_3}{\partial t_e t_f} \right]  = p \leq 1 , ~
     \mathbb{E} \left[\frac{\partial Y_3}{\partial t_e t_f t_g} \right]  \leq 1. \nonumber 
  \end{align*}
  From the above equations, we have $\mathbb{E}_{\geq 1} \left[ Y_3 \right] \leq \max \{1, p^2 \Delta \}$.
\end{proof}
  $\mathbb{E}_{\geq1}\left[Y_3 \right] \leq \mathbb{E}[Y_3]$ implies
\begin{equation}\label{eqn:condY3}
 p \geq \max \{\frac{1}{\sqrt[3]{n_3}},  \Delta/n_3 \} .
\end{equation}

  \begin{lem}\label{lem:S1}
 $S_1$ is totally positive in $t_e$. With respect to the variables $t_e$, $\mathbb{E}_{\geq 1} \left[ S_1\right] \leq  \alpha$.
\end{lem}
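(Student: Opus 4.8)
The plan is to treat this as the simplest instance of the general recipe already used for $Y_0$ and $Y_3$: rewrite $S_1$ so that its monomial coefficients are manifestly nonnegative, and then bound the expected partial derivatives by inspection. Recall from \eqref{eq:S1} that $S_1 = \sum_{\_(e) \in {\cal H}_1} t_e$. Each summand is a single sampled-edge indicator with coefficient $+1$, so $S_1$ is totally positive directly in the variables $t_e$, with no change of variables needed (unlike $Y_0$, which required $y_e = 1 - t_e$ in Lemma \ref{lem:Y0}). First I would collect terms by edge: writing $\alpha_e$ for the number of $H_1$'s whose unique edge is $e$, one gets $S_1 = \sum_{e \in E} \alpha_e t_e$, an affine function of the $t_e$.

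Next I would compute the expected partial derivatives that enter the definition of $\mathbb{E}_{\geq 1}$. Because $S_1$ is linear in the $t_e$, every derivative of total order at least two vanishes identically, so the only contributions come from the first-order derivatives $\partial S_1 / \partial t_e = \alpha_e$, which is a deterministic constant. Hence $\mathbb{E}\left[\partial S_1 / \partial t_e\right] = \alpha_e$ for each edge $e$, while all mixed and higher-order derivatives are zero.

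Finally, taking the maximum over all $\vecalpha$ with $\pnorm{\vecalpha}{1} \geq 1$ in the definition of $\mathbb{E}_{\geq 1}$, the largest value is attained on the first-order terms, giving $\mathbb{E}_{\geq 1}\left[S_1\right] = \max_e \alpha_e = \alpha$, which establishes the claimed bound $\mathbb{E}_{\geq 1}\left[S_1\right] \leq \alpha$ (in fact with equality). There is essentially no obstacle for this piece: the whole difficulty of Theorem \ref{PROFILECONCENTRATION} lives in the nonpositive polynomials $Y_1$ and $Y_2$, which is precisely why they are decomposed into totally positive pieces such as $S_1$; for $S_1$, its linearity makes both total positivity and the derivative bound immediate.
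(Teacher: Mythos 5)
Your proposal is correct and follows essentially the same route as the paper's proof: compute the expected partial derivatives, observe that the first-order derivative with respect to $t_e$ is the deterministic count $\alpha_e$ while all higher-order derivatives vanish by linearity, and conclude $\mathbb{E}_{\geq 1}\left[S_1\right] \leq \alpha$. Your explicit rewriting $S_1 = \sum_{e} \alpha_e t_e$ and the observation that the bound in fact holds with equality are minor elaborations of what the paper leaves implicit, not a different argument.
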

\begin{proof}
  We have the expectation of the following partial derivatives, up to the second order:
  \begin{align*}
    \mathbb{E} \left[ \frac{\partial S_1}{\partial t_e} \right] &= \alpha_e, ~\mathbb{E} \left[\frac {\partial S_1}{\partial t_e t_f} \right] = 0 .
     \end{align*}
  From the above equations, we have $\mathbb{E}_{\geq 1} \left[ S_1 \right] \leq \alpha$ .
\end{proof}
  $\mathbb{E}_{\geq1}\left[S_1 \right] \leq \mathbb{E}[S_3]$ implies
\begin{equation}\label{eqn:condS1}
 p \geq   \alpha/n_1.
\end{equation}

 \begin{lem}\label{lem:D1}
 $D_1$ is totally positive in $t_e$. With respect to the variables $t_e$, $\mathbb{E}_{\geq 1} \left[ D_1 \right] \leq \beta$.
\end{lem}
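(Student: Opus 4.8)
The plan is to follow the template already used for $S_1$ in Lemma \ref{lem:S1}, exploiting the fact that $D_1$ is, like $S_1$, \emph{linear} in the edge indicators $t_e$. First I would verify total positivity directly from the definition \eqref{eq:D1}: the polynomial $D_1 = \sum_{\Lambda(e,f) \in {\cal H}_2} (t_e + t_f)$ is a sum of degree-one monomials each carrying coefficient $+1$, so all its coefficients are non-negative and $D_1$ is totally positive in the variables $t_e$ with no change of variables required.

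Next I would bound $\mathbb{E}_{\geq 1}[D_1]$ by computing the partial derivatives. A first-order derivative $\partial / \partial t_e$ kills every wedge term except those $\Lambda(e',f')$ with $e \in \{e',f'\}$, and since the two edges of any wedge are distinct, each such surviving wedge contributes exactly $1$. The number of wedges through $e$ is by definition $\beta_e$, so $\mathbb{E}\left[\frac{\partial D_1}{\partial t_e}\right] = \beta_e \leq \beta$, a deterministic quantity. Because $D_1$ is linear in the $t_e$, every partial derivative of order at least two vanishes identically, giving $\mathbb{E}\left[\partial^\vecalpha D_1\right] = 0$ for all $\vecalpha$ with $\pnorm{\vecalpha}{1} \geq 2$. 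Maximizing over all $\vecalpha$ with $\pnorm{\vecalpha}{1} \geq 1$ then yields $\mathbb{E}_{\geq 1}[D_1] = \max_e \beta_e = \beta$, which is the claimed bound.

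There is essentially no obstacle here, as $D_1$ is the simplest of the totally positive building blocks: its linearity annihilates all higher-order derivatives, so the bound is governed entirely by the first-order term, which is a clean combinatorial count. The only point deserving care is confirming that the first derivative equals $\beta_e$ rather than some larger multiple; this follows because a fixed edge $e$ can occupy at most one of the two edge slots of any given wedge, so each wedge through $e$ is counted exactly once in $\partial D_1 / \partial t_e$.
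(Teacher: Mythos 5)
Your proof is correct and follows essentially the same route as the paper: total positivity is immediate from the definition \eqref{eq:D1}, the first-order derivative $\partial D_1/\partial t_e$ equals $\beta_e$, and all higher-order derivatives vanish by linearity, giving $\mathbb{E}_{\geq 1}[D_1] \leq \beta$. Your added remarks (that the derivative is deterministic and that each wedge through $e$ is counted exactly once) are harmless elaborations of the same computation.
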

\begin{proof}
  We have the expectation of the following partial derivatives, up to the second order:
  \begin{align*}
    \mathbb{E} \left[ \frac{\partial D_1}{\partial t_e} \right] &= \beta_e, ~\mathbb{E} \left[\frac {\partial D_1}{\partial t_e t_f} \right] = 0 .
     \end{align*}
  From the above equations, we have $\mathbb{E}_{\geq 1} \left[ D_1 \right] \leq \beta$ .
\end{proof}
  $\mathbb{E}_{\geq1}\left[D_1 \right] \leq \mathbb{E}[D_1]$ implies
\begin{equation}\label{eqn:condD1}
 p \geq   \beta/(2n_2).
\end{equation}

 \begin{lem}\label{lem:T1}
 $T_1$ is totally positive in $t_e$. With respect to the variables $t_e$, $\mathbb{E}_{\geq 1} \left[ T_1 \right] \leq \Delta$.
\end{lem}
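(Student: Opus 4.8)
The plan is to follow verbatim the template already used for Lemmas \ref{lem:S1} and \ref{lem:D1}, since $T_1$ is structurally identical to $S_1$ and $D_1$: it is a \emph{linear} polynomial in the edge indicators, with no products of distinct variables. First I would establish total positivity. From \eqref{eq:T_1}, $T_1 = \sum_{\Delta(e,f,g) \in {\cal H}_3} (t_e + t_f + t_g)$ is a sum of single-variable monomials, so after collecting terms the coefficient of each $t_e$ is simply the number of triangles of $G$ through the edge $e$, a non-negative integer. Every coefficient is therefore non-negative and $T_1$ is totally positive in the variables $t_e$ directly, with no change of variables required (unlike $Y_0$, which needed $y_e = 1 - t_e$).

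Next I would compute $\mathbb{E}_{\geq 1}[T_1]$ straight from the definition. Because each monomial of $T_1$ is a single variable, $T_1$ is linear, so every partial derivative of order two or higher vanishes identically; only the first-order terms contribute to the maximum over $\vecalpha$ with $\pnorm{\vecalpha}{1} \geq 1$. For a first-order derivative, $\partial T_1 / \partial t_e$ equals the number of triangles of $G$ containing $e$, namely $\Delta_e$, which is a deterministic constant, so $\mathbb{E}[\partial T_1 / \partial t_e] = \Delta_e$. Taking the maximum over all edges then gives $\mathbb{E}_{\geq 1}[T_1] = \max_e \Delta_e = \Delta$, which is exactly the claimed bound $\mathbb{E}_{\geq 1}[T_1] \leq \Delta$.

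There is no genuine obstacle here; the statement is a routine linearity computation, and the only care needed is bookkeeping. Specifically, I would confirm that the coefficient of $t_e$ in the collected form of $T_1$ counts precisely the triangles through $e$ (so that it matches the definition of $\Delta_e$), and that the higher-order derivatives truly vanish because $T_1$ carries no cross terms $t_e t_f$. This linearity is exactly what makes $T_1$ simpler to handle than $Y_3$ or $Y_2$, whose genuinely nonlinear monomials produce nonzero second- and third-order derivatives and hence larger $\mathbb{E}_{\geq 1}$ contributions.
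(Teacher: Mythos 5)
Your proof is correct and takes essentially the same route as the paper's: both observe that $T_1$ is linear in the $t_e$ with non-negative coefficients, compute $\mathbb{E}\left[\partial T_1/\partial t_e\right] = \Delta_e$, note that all second- and higher-order partials vanish, and conclude $\mathbb{E}_{\geq 1}\left[T_1\right] \leq \Delta$. The only difference is that you spell out the total-positivity bookkeeping (the coefficient of $t_e$ being the triangle count through $e$) that the paper leaves implicit.
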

\begin{proof}
  We have the expectation of the following partial derivatives, up to the second order:
  \begin{align*}
    \mathbb{E} \left[ \frac{\partial T_1}{\partial t_e} \right] &= \Delta_e, ~\mathbb{E} \left[\frac {\partial T_1}{\partial t_e t_f} \right] =0 .
     \end{align*}
  From the above equations, we have $\mathbb{E}_{\geq 1} \left[ T_1 \right] \leq \Delta$ .
\end{proof}
  $\mathbb{E}_{\geq1}\left[T_1 \right] \leq \mathbb{E}[T_1]$ implies
\begin{equation}\label{eqn:condT1}
 p \geq   \Delta/(3n_3).
\end{equation}

\begin{lem}\label{lem:D2}
 $D_2$ is totally positive in $t_e$. With respect to the variables $t_e$, $\mathbb{E}_{\geq 1} \left[ D_2 \right] \leq \max \{p\beta,1\}$.
\end{lem}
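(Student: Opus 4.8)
The plan is to follow the template already established in Lemmas~\ref{lem:S1}--\ref{lem:T1}, with the one new wrinkle that $D_2$ has degree $2$ rather than $1$, so second-order partial derivatives no longer vanish and will contribute to $\mathbb{E}_{\geq 1}[D_2]$.

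Total positivity is immediate: from \eqref{eq:D2} we have $D_2 = \sum_{\Lambda(e,f) \in \mathcal{H}_2} t_e t_f$, a sum of monomials each with coefficient $1$, so every coefficient is non-negative. To bound $\mathbb{E}_{\geq 1}[D_2]$ I would inspect the partial derivatives of orders $1$ and $2$. Differentiating with respect to a single $t_e$ retains exactly the wedges that use edge $e$, giving $\partial D_2/\partial t_e = \sum_{f:\, \Lambda(e,f) \in \mathcal{H}_2} t_f$; taking expectation with $\mathbb{E}[t_f] = p$ and using the definition of $\beta_e$ yields $\mathbb{E}[\partial D_2/\partial t_e] = p\beta_e \leq p\beta$. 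For the second order, a pair of distinct edges $e,f$ sharing a common vertex determines at most one wedge, so the coefficient of $t_e t_f$ is at most $1$ and $\mathbb{E}[\partial^2 D_2/\partial t_e \partial t_f] \leq 1$; the diagonal derivatives $\partial^2 D_2/\partial t_e^2$ vanish by multilinearity.

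Taking the maximum over all derivative orders then gives $\mathbb{E}_{\geq 1}[D_2] \leq \max\{p\beta, 1\}$, exactly the claimed bound. I do not expect any genuine obstacle here; the only point requiring care, which separates this case from the degree-one polynomials $S_1, D_1, T_1$, is that the surviving second-order term contributes the constant $1$ inside the maximum, mirroring how the constant enters the bound for $Y_3$ in Lemma~\ref{lem:Y3}.
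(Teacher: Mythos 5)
Your proof is correct and matches the paper's argument essentially verbatim: both bound the first-order derivative expectation by $p\beta_e \leq p\beta$ and the second-order mixed derivative by $1$, then take the maximum. Your added remarks (explicit form of $\partial D_2/\partial t_e$, vanishing of diagonal second derivatives by multilinearity, and the observation that a pair of edges determines at most one wedge) are just fuller justifications of steps the paper states tersely.
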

\begin{proof}
  We have the expectation of the following partial derivatives, up to the second order:
  \begin{align*}
    \mathbb{E} \left[ \frac{\partial D_2}{\partial t_e} \right] &= p\beta_e, ~\mathbb{E} \left[\frac {\partial D_2}{\partial t_e t_f} \right] \leq 1 .
     \end{align*}
  From the above equations, we have $\mathbb{E}_{\geq 1} \left[ D_2 \right] \leq \max \{p\beta,1\}$.
\end{proof}
  $\mathbb{E}_{\geq1}\left[D_2 \right] \leq \mathbb{E}[D_2]$ implies
\begin{equation}\label{eqn:condD2}
 p \geq  \max \{ \beta/n_2, \frac{1}{\sqrt{n_2}}  \}.
\end{equation}

\begin{lem}\label{lem:T2}
  $T_2$ is totally positive in $t_e$. With respect to the variables $t_e$, $\mathbb{E}_{\geq 1} \left[ T_2 \right] \leq \max \{2p\Delta,1\}$.
\end{lem}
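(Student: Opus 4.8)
The plan is to follow verbatim the template already used for the other totally positive polynomials, in particular the immediately preceding lemma for $D_2$, since $T_2$ is the triangle analogue of the wedge quantity $D_2$. First I would observe that $T_2$ is totally positive in the variables $t_e$: by its definition in \eqref{eqn:T2}, every monomial is of the form $t_e t_f$ for a pair of edges $e,f$ belonging to a common triangle $\Delta(e,f,g) \in {\cal H}_3$, and each such monomial carries coefficient $+1$, so no change of variables is needed here (unlike the $Y_0$ case).

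Next I would compute the expectations of the partial derivatives of $T_2$ up to its degree, which is $2$. For a fixed edge $e$, the only monomials of $T_2$ involving $t_e$ come from triangles containing $e$; in each such triangle $\Delta(e,f,g)$ the relevant terms are $t_e t_f + t_g t_e = t_e(t_f+t_g)$, so differentiating in $t_e$ leaves $t_f + t_g$, whose expectation is $2p$. Summing over the $\Delta_e$ triangles that contain $e$ gives $\mathbb{E}\!\left[\partial T_2/\partial t_e\right] = 2p\,\Delta_e \le 2p\,\Delta$. For the second-order partials, the multilinearity of $T_2$ kills any repeated index, so the only surviving terms are the mixed derivatives $\partial^2 T_2/\partial t_e \partial t_f$, which are nonzero precisely when $e$ and $f$ are two edges of a common triangle; two edges determine that triangle uniquely, so this count is at most $1$ and the expectation is $\le 1$.

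Finally I would take the maximum over all multi-indices with $\pnorm{\vecalpha}{1} \ge 1$, namely the maximum of the first-order bound $2p\Delta$ and the second-order bound $1$, to conclude $\mathbb{E}_{\ge 1}\!\left[T_2\right] \le \max\{2p\Delta, 1\}$, exactly as claimed. This step is the direct analogue of how $\mathbb{E}_{\ge 1}[D_2] \le \max\{p\beta,1\}$ was obtained.

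I do not expect a genuine obstacle here, as the argument is a routine instance of the general recipe. The only point requiring a touch of care is the combinatorial bookkeeping in the first-order derivative, where the factor of $2$ arises because each edge sits in exactly two of the three monomials of every triangle it belongs to; this is what distinguishes the bound $2p\Delta$ for $T_2$ from the bound $p\beta$ for $D_2$, where a wedge edge sits in only one monomial.
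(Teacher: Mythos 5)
Your proposal is correct and follows essentially the same argument as the paper: the paper's proof likewise computes $\mathbb{E}\left[\partial T_2/\partial t_e\right] = 2p\Delta_e$ and bounds the mixed second-order partials by $1$, then takes the maximum to get $\max\{2p\Delta,1\}$. Your additional bookkeeping (each edge appearing in two of the three monomials per triangle, and two edges determining a triangle uniquely) is exactly the justification the paper leaves implicit.
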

\begin{proof}
  We have the expectation of the following partial derivatives, up to the second order:
  \begin{align*}
    \mathbb{E} \left[ \frac{\partial T_2}{\partial t_e} \right] &= 2p\Delta_e, ~\mathbb{E} \left[\frac {\partial T_2}{\partial t_e t_f} \right] \leq 1 .
     \end{align*}
  From the above equations, we have $\mathbb{E}_{\geq 1} \left[ T_2 \right] \leq \max \{2p\Delta,1\}$.
\end{proof}
  $\mathbb{E}_{\geq1}\left[T_2 \right] \leq \mathbb{E}[T_2]$ implies
\begin{equation}\label{eqn:condT2}
 p \geq  \max \{ 2\Delta/(3n_3), \frac{1}{\sqrt{3n_3}}  \}.
\end{equation}

Now merging all the conditions \eqref{eqn:condY0}-\eqref{eqn:condT2}, we get
\begin{align} \label{eqn:megacond1}
\begin{split}
  n_0 &\geq 3 \max\{\alpha,\beta,\Delta\}\\
   p &\geq \max \{ \frac{1}{\sqrt[3]{n_3}},\frac{1}{\sqrt{n_2}} ,\frac{\Delta}{n_3}, \frac{\beta}{n_2} ,\frac{\alpha}{n_1}\}.
 \end{split}
\end{align}

Applying Proposition \ref{KIMVUCONCENTRATION} to all the totally positive polynomials, along with \eqref{eqn:megacond1}, we get
\begin{align}
\begin{split}
\mathbb{P}\left( |Y_0 - \mathbb{E}[Y_0] | > a_3 \sqrt{\mathbb{E}[Y_0] \mathbb{E}_{\geq 1}[Y_0]}  \lambda_1^3 \right)   \\
= \mathcal{O}\left(\exp \left(-\lambda_1 + (2) \log m \right) \right)    \\
\mathbb{P}\left( |Y_3- \mathbb{E}[Y_3] | > a_3 \sqrt{\mathbb{E}[Y_3] \mathbb{E}_{\geq 1}[Y_3]}  \lambda_2^3 \right)  \\
= \mathcal{O}\left(\exp \left(-\lambda_2 + (2) \log m \right) \right)  \\
\mathbb{P}\left( |S_1 - \mathbb{E}[S_1] | > a_1 \sqrt{\mathbb{E}[S_1] \mathbb{E}_{\geq 1}[S_1]}  \lambda_3 \right)  \\
= \mathcal{O}\left(\exp \left(-\lambda_3  \right) \right)  \\
\mathbb{P}\left( |D_1 - \mathbb{E}[D_1] | > a_1 \sqrt{\mathbb{E}[D_1] \mathbb{E}_{\geq 1}[D_1]}  \lambda_4 \right)  \\
= \mathcal{O}\left(\exp \left(-\lambda_4  \right) \right)  \\
\mathbb{P}\left( |T_1 - \mathbb{E}[T_1] | > a_1 \sqrt{\mathbb{E}[T_1] \mathbb{E}_{\geq 1}[T_1]}  \lambda_5  \right)  \\
= \mathcal{O}\left(\exp \left(-\lambda_5   \right) \right)  \\
\mathbb{P}\left( |D_2 - \mathbb{E}[D_2] | > a_2 \sqrt{\mathbb{E}[D_2] \mathbb{E}_{\geq 1}[D_2]}  \lambda_6^2 \right)  \\
= \mathcal{O}\left(\exp \left(-\lambda_6 +\log m \right) \right)  \\
\mathbb{P}\left( |T_2 - \mathbb{E}[T_2] | > a_2 \sqrt{\mathbb{E}[T_2] \mathbb{E}_{\geq 1}[T_2]}  \lambda_7^2 \right)  \\
= \mathcal{O}\left(\exp \left(-\lambda_7 + \log m \right) \right) . \label{eqn:equal}
\end{split}
\end{align}

Choose an $\epsilon>0$. We force the following conditions:
 \begin{align}
 \begin{split}
 a_3 \sqrt{\mathbb{E}[Y_0] \mathbb{E}_{\geq 1}[Y_0]}  \lambda_1^3 &= \epsilon \mathbb{E}[Y_0] \\
 a_3 \sqrt{\mathbb{E}[Y_3] \mathbb{E}_{\geq 1}[Y_3]}  \lambda_2^3 &= \epsilon \mathbb{E}[Y_3]  \\
 a_1 \sqrt{\mathbb{E}[S_1] \mathbb{E}_{\geq 1}[S_1]}  \lambda_3 &= \epsilon \mathbb{E}[S_1]  \\
 a_1 \sqrt{\mathbb{E}[D_1] \mathbb{E}_{\geq 1}[D_1]}  \lambda_4  &= \epsilon \mathbb{E}[D_1]  \\
 a_1 \sqrt{\mathbb{E}[T_1] \mathbb{E}_{\geq 1}[T_1]}  \lambda_5  &= \epsilon \mathbb{E}[T_1] \\
 a_2 \sqrt{\mathbb{E}[D_2] \mathbb{E}_{\geq 1}[D_2]}  \lambda_6^2 &= \epsilon \mathbb{E}[D_2]  \\
 a_2 \sqrt{\mathbb{E}[T_2] \mathbb{E}_{\geq 1}[T_2]}  \lambda_7^2 &= \epsilon \mathbb{E}[T_2] .
\end{split}
 \end{align}
 
 Let $\gamma>0$. For the right hand side of every equation in \eqref{eqn:equal} to be $\mathcal{O}(\exp(- \gamma \log m))$, assuming all the bounds in Lemmas \ref{lem:Y0}-\ref{lem:T2}, it is sufficient to have
\begin{align}
\begin{split}
\frac{n_0}{3 \max\{\alpha,\beta,\Delta\}} &\geq \frac{a_3^2 \log^6 \left(m^{2+\gamma} \right)}{\epsilon^2}  \\
\frac{p} { \max \{\frac{1}{\sqrt[3]{n_3}},  \Delta/n_3 \} } &\geq \frac{a_3^2 \log^6 \left(m^{2+\gamma} \right)}{\epsilon^2}  \\
\frac{p} {\max\{\frac{\alpha}{n_1},\frac{\beta}{2n_2},\frac{\Delta}{3n_3}\} } &\geq \frac{a_1^2 \log^2 \left(m^{\gamma} \right)}{\epsilon^2}  \\
\frac{p}{\max \{\frac{\beta}{n_2},\frac{2\Delta}{3n_3},\frac{1}{\sqrt{n_2}},\frac{1}{\sqrt{n_3}} \}} & \geq  \frac{a_2^2 \log^4 \left(m^{1+\gamma} \right)}{\epsilon^2} . \label{eq:condprefinal}
\end{split}
\end{align}

We can see that the conditions in \eqref{eq:condprefinal} imply the conditions in \eqref{eqn:megacond1}. These can be simplified to remove some redundancy as follows:
\begin{align}
\begin{split}
\label{eqn:condfinal}
\frac{n_0}{3 \max\{\alpha,\beta,\Delta\}} &\geq \frac{a_3^2 \log^6 \left(m^{2+\gamma} \right)}{\epsilon^2}  \\
\frac{p} { \max \{\frac{1}{\sqrt[3]{n_3}},  \Delta/n_3 \} } &\geq \frac{a_3^2 \log^6 \left(m^{2+\gamma} \right)}{\epsilon^2}  \\
\frac{p} {\alpha/n_1 } &\geq \frac{a_1^2 \log^2 \left(m^{\gamma} \right)}{\epsilon^2}  \\
\frac{p}{\max \{\frac{\beta}{n_2},\frac{1}{\sqrt{n_2}} \}} & \geq  \frac{a_2^2 \log^4 \left(m^{1+\gamma} \right)}{\epsilon^2} .
\end{split}
\end{align}
This is due to the fact that $a_3 \geq a_2 \geq a_1$ and $m^3 \geq m^2$. Therefore, subject to \eqref{eqn:condfinal}, all totally positive polynomials $Y_0,Y_3,D_1,D_2,S_1,T_1,T_2$ concentrate within a multiplicative factor of $(1\pm \epsilon)$ with probability at least $1-\mathcal{O}\left(\frac{1}{m^{\gamma}} \right)$.

Under the above concentration result, let the deviations of $X_i$'s be denoted by $\delta X_i$. Now we calculate the deviation of $X_0$ using \eqref{eqn:X0}.
\begin{align}
 \delta X_0 & \leq \epsilon \mathbb{E}[Y_0] + \epsilon \frac{1-p}{p}\left( \lvert \mathbb{E}[S_1] \rvert+ \lvert \mathbb{E}[D_1] \rvert+ \lvert \mathbb{E}[T_1]\rvert \right) \nonumber\\
  &  \qquad + \epsilon \frac{1-p^2}{p^2} \left(  \lvert\mathbb{E} [D_2] \rvert +  \lvert \mathbb{E}[T_2] \rvert \right)   - \epsilon \frac{1-p^3}{p^3} \lvert \mathbb{E}[Y_3] \rvert \nonumber\\
	\hfill & \leq \epsilon (n_0 + n_1 + 3n_2 + 7n_3 ) \nonumber \\ 
	\hfill & \leq 7 \epsilon (n_0+n_1+n_2+n_3). \nonumber
\end{align}

Similarly for other $X_i$'s, we get
 \begin{align*}
   \delta X_1 &\leq 12 \epsilon \left(n_1+n_2+n_3 \right) \\
   \delta X_2 &\leq 6\epsilon \left(n_2+n_3 \right) \\
    \delta X_3 &\leq \epsilon n_3 .
 \end{align*}

Therefore, sampling every edge independently with probability $p$ satisfying all conditions in (\ref{eqn:condfinal}), all $X_i's$ concentrate within an additive gap of $(1 \pm 12 \epsilon) {\lvert V \rvert \choose 3}$ with probability at least $1- \frac{1}{m^{\gamma}}$. The constants in this proof can be tightened by a more accurate analysis.

\end{document}